\documentclass[10pt, compsoc, journal]{IEEEtran}
\usepackage{cite}
\usepackage{graphicx}
\usepackage{lipsum}
\usepackage[cmex10]{amsmath}
\usepackage{multirow}
\usepackage{amsthm}
\usepackage{float}
\usepackage{color}
\usepackage[lined,linesnumbered, ruled]{algorithm2e}
\usepackage{caption}
\usepackage{booktabs}
\usepackage{ragged2e}

\usepackage{array}

\usepackage{subcaption}

\usepackage{enumitem}

\newtheorem{theorem}{Theorem}
\newtheorem{lemma}{Lemma}

\hyphenation{op-tical net-works semi-conduc-tor}

\DeclareCaptionLabelSeparator{periodspace}{.\quad}
\captionsetup{font=footnotesize,labelsep=periodspace,singlelinecheck=false}
\captionsetup[sub]{font=footnotesize,singlelinecheck=true}

\captionsetup[table]{format=plain,labelformat=simple,labelsep=period}

\begin{document}
\newcommand{\notehaibo}[1]{\textcolor{blue}{[haibo: #1]}}

\title{Accelerating Fully Connected Neural Network on Optical Network-on-Chip (ONoC)}
\vspace{-2mm}
\author{Fei Dai, Yawen Chen, Haibo Zhang and Zhiyi Huang
\thanks{F. Dai, Y. Chen, Z. Huang, and H. Zhang are with the Department of Computer Science, University of Otago, Dunedin 9016, New Zealand, e-mail: (travis, yawen, haibo, zhuang@cs.otago.ac.nz).}
\thanks{Manuscript received XX XX, 20XX; revised XX XX, 20XX.}
}
\IEEEtitleabstractindextext{
\begin{abstract}
	Fully Connected Neural Network (FCNN) is a class of Artificial Neural Networks widely used in computer science and engineering, whereas the training process can take a long time with large datasets in existing many-core systems. Optical Network-on-Chip (ONoC), an emerging chip-scale optical interconnection technology, has great potential to accelerate the training of FCNN with low transmission delay, low power consumption, and high throughput. However, existing methods based on Electrical Network-on-Chip (ENoC) cannot fit in ONoC because of the unique properties of ONoC. In this paper, we propose a fine-grained parallel computing model for accelerating FCNN training on ONoC and derive the optimal number of cores for each execution stage with the objective of minimizing the total amount of time to complete one epoch of FCNN training. To allocate the optimal number of cores for each execution stage, we present three mapping strategies and compare their advantages and disadvantages in terms of hotspot level, memory requirement, and state transitions. 
	Simulation results show that the average prediction error for the optimal number of cores in NN benchmarks is within 2.3\%. 
	We further carry out extensive simulations which demonstrate that FCNN training time can be reduced by 22.28\% and 4.91\% on average using our proposed scheme, compared with traditional parallel computing methods that either allocate a fixed number of cores or allocate as many cores as possible, respectively. Compared with ENoC, simulation results show that under batch sizes of 64 and 128, on average ONoC can achieve 21.02\% and 12.95\% on reducing training time with 47.85\% and 39.27\% on saving energy, respectively.

\end{abstract}

\begin{IEEEkeywords}
Optical network on chip, fully connected neural network, parallel computation, mapping.
\end{IEEEkeywords}}

\maketitle
\IEEEdisplaynontitleabstractindextext
\IEEEpeerreviewmaketitle

\vspace{-1mm}
\section{Introduction}
\noindent Artificial Neural Networks (ANNs), such as Fully Connected Neural Network (FCNN), Convolution Neural Network (CNN) and Recurrent Neural Network (RNN), are very popular nowadays. 
FCNN known as Multilayer perceptron (MLP) has the architecture that all the neurons in one layer are connected to the neurons in the next layer, which is widely employed in the applications related to prediction, pattern classification and function approximation in practical. Theoretically, FCNN is able to approximate any functions with a degree of loss and had already been proven to be a universal function approximator~\cite{chen2019hardware}. It is reported that FCNN dominates the communication workloads in data centers, where FCNN, CNN and RNN represent 61\%, 19\% and 5\% of the total workload respectively~\cite{jouppi2017datacenter}. Moreover, the fully connected network structure is commonly used by CNN to implement classification. However, FCNN training process can take a long training time with large data sets mainly due to the data movement latency in the hierarchical memory architecture~\cite{memory1}. Therefore, it is of great significance to break the bottleneck of training FCNN.  

To accelerate the training of FCNN, parallel computations are adopted by using parallel programming architectures, such as MPI, OpenCL, OpenMP, CUDA, and etc.
With these parallel programming architectures, FCNN training model or training data can be divided and then assigned to different computing units or platforms. Nevertheless, the latency caused by the data movement from the traditional bus-based main memory to cache, is hard to satisfy the demand of high performance computing in many-core processors.
Though Electrical Network-on-Chip (ENoC) has been proposed to replace the traditional communication methods, transmission delay and power consumption are still two major concerns~\cite{optical}. With more and more cores integrated in one single chip, parallel computing for FCNN training can easily reach communication bottleneck due to the limitations of ENoC.

To break the communication bottleneck, ONoC has been proposed as a promising alternative with the recent development of CMOS-compatible optical devices~\cite{Infocom}.
Instead of using electrical signal, ONoC uses optical signals to transmit data through waveguides with obvious advantages over ENoC, including low transmission delay, low power cost, high bandwidth and high throughput. 
According to~\cite{nicolescu2017photonic}, the average transmission latency in ENoC is 10 times more than that in ONoC, and ENoC has much higher power consumption than ONoC. 
Moreover, ONoC enables multiple optical signals to be transmitted simultaneously in one waveguide using different wavelengths by Wavelength Division Multiplexing (WDM)~\cite{Shacham1}. 
With these advantages, ONoC has great capability to perform intensive and high throughput inter-core communications required by the data exchange among cores for accelerating the parallel computing of FCNN training. The latest research progress \cite{zhang2019artificial} also indicates that more and more scientists are applying light based technologies for the research area of neural networks. 

By leveraging the advantages of ONoC, we aim to develop an acceleration model for FCNN training by addressing the following challenges:  
{\it (1) Modeling the computations and communications for training FCNN on ONoC and (2) assigning cores of ONoC to different execution stages of FCNN with the objective of minimizing the total training time.}
The specific problems include: What are the optimal numbers of cores for training a FCNN in different execution stages? How to map the neurons to cores on an ONoC for both forward propagation (FP) and back propagation (BP) within the wavelength limitation? What are the memory requirements of cores to store the FCNN parameters? 
In this paper, we address the above challenges with key contributions summarized as follows:

\begin{itemize}[leftmargin=0.1in]
\item We propose a fine-grained parallel computing model for FCNN training on ONoC, which can be used to analyze the trade-off between computation and communication in FP and BP processes. Based on this model, we derive the optimal number of cores required in each execution stage to minimize the total training time. 
\item We propose three mapping strategies for allocating the optimal numbers of cores to different stages of FCNN training. The advantages and disadvantages for each mapping strategy are discussed and analyzed in terms of hot-spot level, memory requirement, and state transitions. 
\item We evaluate our proposed acceleration schemes with extensive simulations. Firstly, we conduct simulations to show that the average prediction error on the optimal number of cores using NN benchmarks is within 2.3\%, which verifies the effectiveness of our model. Secondly, we compare our proposed methods with traditional parallel computing methods that either allocate a fixed number of cores or allocate as many cores as possible. Results show that FCNN training time by our proposed scheme on ONoC is reduced by 22.28\% and 4.91\% on average, respectively. Lastly, we evaluate the performance and energy consumption of our methods between ONoC and ENoC, which shows the training time on ONoC is reduced by 21.02\% and 12.95\% and the energy consumption is reduced by 47.85\% and 39.27\% compared with ENoC under batch sizes 64 and 128, respectively.      
	
\end{itemize}
The rest of this paper is organized as follows. Section~\ref{sec:background} presents the background with motivation examples.
Section~\ref{sec:acceleration} describes our proposed model and optimal solution.
Section~\ref{sec:mapping} illustrates three mapping strategies.  
Section~\ref{sec:evaluation} evaluates proposed models and methods, and Section~\ref{sec:related} presents related work. Finally, Section~\ref{sec:conclusion} concludes the paper.


\section{Background and Motivation}\label{sec:background}

\subsection{Fully Connected Neural Network} 
\noindent  
A FCNN has one input layer, one or more hidden layers and one output layer. The neurons of a FCNN are fully connected from one layer to the next layer with weights and bias. All neurons in the same hidden layer have the same activation function (e.g., Sigmoid, Tanh, ReLU and Softmax). Similarly, all neurons in the output layer have the same cost function (e.g., Mean squared error, Cross entropy, log-likelihood)~\cite{murray1992multilayer}. 
For each neuron, it first reads the input and executes linear computation with weights and biases, then executes the non-linear function as follows: 

\begin{equation}
\label{eq:f}
Y=A(W^TX+b) , 
\end{equation}
where  $X$ is the input vector, $Y$ is the predicted output vector,   
$W$ is the  weight vector, $b$ is the bias vector, and $A$ is the activation or cost function. The output of the neuron from the activation function is used as the
input of the neurons in the next layer. 
All neurons follow the same computation pattern layer by layer until the output layer. This is called the forward propagation in FCNN training.

After forward propagation, the predicted output and the expected output in the output layer are compared to get the loss of forward propagation. 
Then, partial derivatives are used to calculate the gradients to minimize the loss of forward propagation. 
The gradients are back-propagated to the previous layers to update the weights and biases through the chain rule computation~\cite{murray1992multilayer}. 
In back propagation, if the gradient vector calculated based on the loss is $ \rho_j$, the weights are updated as follows: 
\begin{equation}
\label{eq:b}
\rho=\sum_{j=1}^{T} \rho_j ,
\end{equation}
\begin{equation}
\label{eq:b}
W=W+\eta\rho,
\end{equation}
where $\eta$ is the learning rate and $T$ is the number of training samples. 

\vspace{-2mm}
\begin{figure}[!hpbt]
	\centering
	\includegraphics[width=8cm]{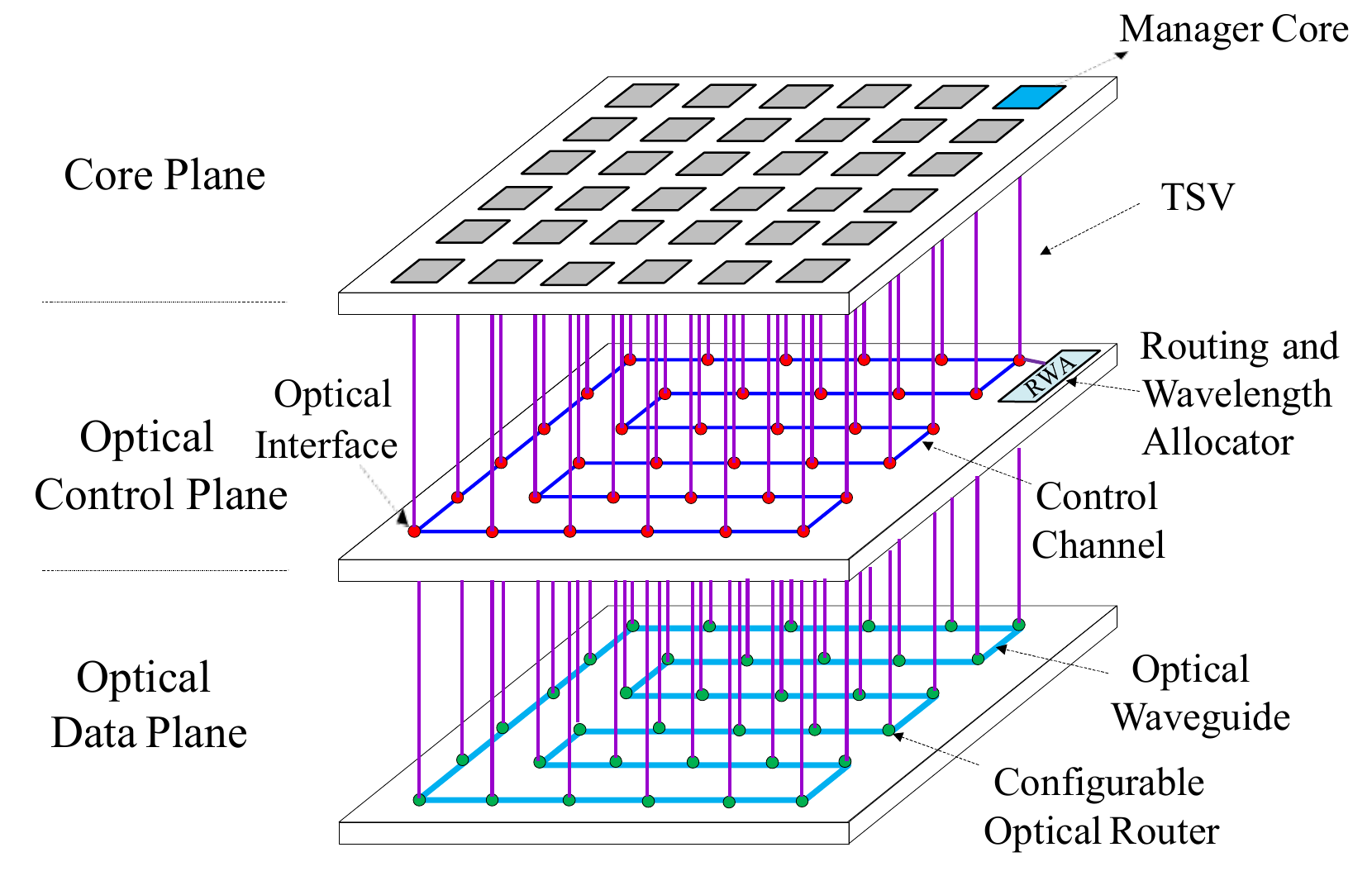}
	{
		\vspace{-2mm}
		\caption{Ring-based ONoC with three planes.}
		
		\label{fig:architecture}
	}
	\vspace{-2mm}
\end{figure}

\subsection{Optical Network-on-Chip} 
\noindent The ONoC architecture used to accelerate FCNN is shown in Fig. \ref{fig:architecture}, which contains three planes~\cite{liu2016ring}: core plane, optical control plane, and optical data plane. In general, the core plane contains the cores to realize parallel computing, while the manager core is used to calculate routing and wavelength assignment and send requests to routing and wavelength allocator (RWA) in the control plane. The optical control plane contains the RWA and a cyclic optical control channel to configure the state of cores in the core plane and the optical routers in the optical data plane, as can be seen in Fig. \ref{fig:core}. The optical data plane utilizes the configurable optical routers connected by a ring topology to provide optical data transmission, which enables cores to send and receive data packets simultaneously using WDM technology. The cores and optical routers are connected to the control channel by Through-Silicon Via (TSV) for router configuration and data transmission. Note that there are two control components: the manager core in the core plane calculates the configurations and the RWA in the optical control plane controls the network configuration. The manager core and the RWA share the same optical interface and they cooperate to transmit control packets to different interfaces by different wavelengths at the same time and each interface can receive the right packets according to the wavelength. Accordingly, the corresponding modulators in transmitters' routers and drop filters in receivers' routers are configured and ready for communications.
Fig.~\ref{fig:router} illustrates the connections for optical components in the optical router, where the splitter is used to split optical signals on the receiver side (Rx), then the activated drop filters absorb the corresponding optical signals, and finally, optical signals are converted to electrical signals by photo-detector. The coupler on the transmitter side (Tx) is used to inject modulated light into the waveguide. We assume single waveguide is used in our design with off-chip laser source, while extended work can be further investigated for multiple waveguides~\cite{ortin2017contrasting} or on-chip laser source~\cite{Batten}.
Besides, we assume that each core in the core plane has an on-chip distributed memory architecture with its L1 private cache and distributed SRAM connected to the main memory via the memory controller. The details of routing and wavelength assignment scheme and system parameters will be described in Section 4.6 and Section \ref{sec:evaluation}.

\vspace{-2mm}
\begin{figure}[!hpbt]
	\centering
	\includegraphics[scale=0.6]{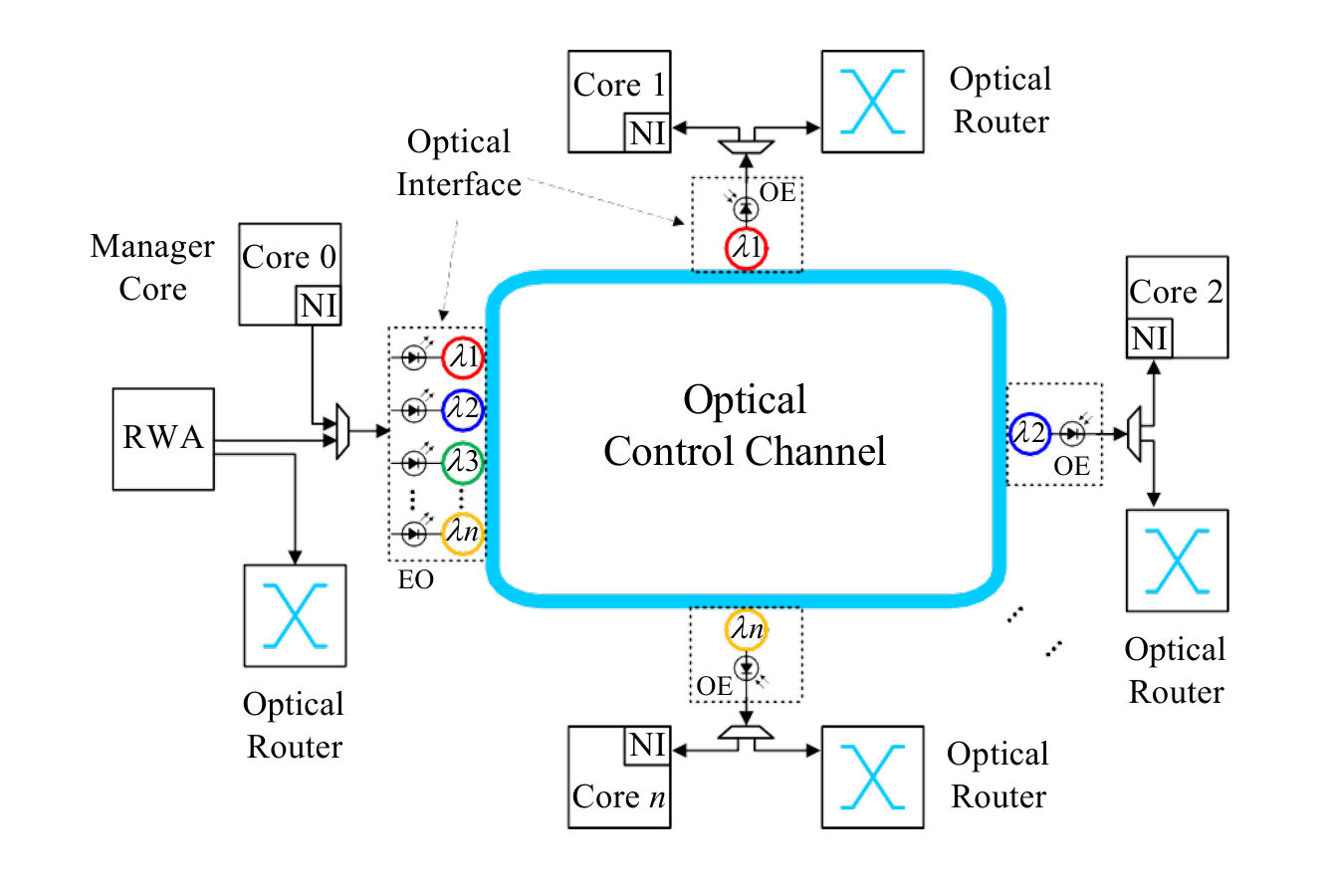}
	{
		\vspace{-3mm}
		\caption{The connection of optical interfaces in the optical control channel. 
		}
		\vspace{-4mm}
		\label{fig:core}
	}
	\vspace{1.5mm}
	
\end{figure}

\vspace{-1mm}
\begin{figure}[!hpbt]
	\centering
	\includegraphics[scale=0.3]{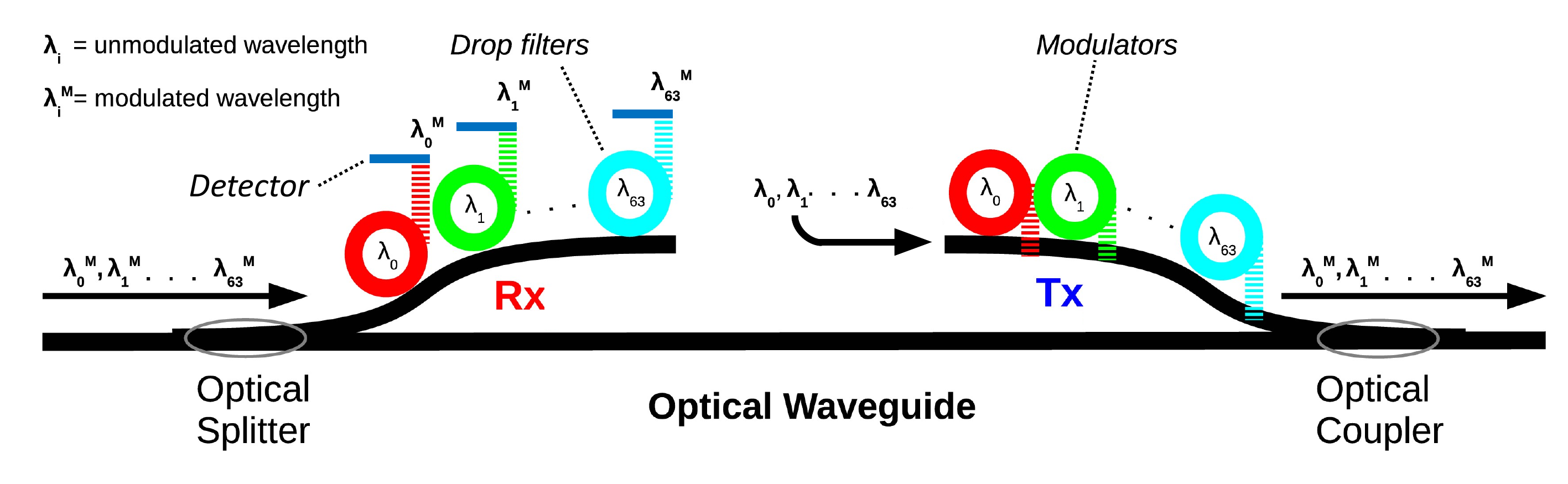}
	{
		\vspace{-2mm}
		\caption{Configurable optical router.}
		\vspace{-2mm}
		\label{fig:router}
	}
	\vspace{1.5mm}
	
\end{figure}



\vspace{-2mm}
\subsection{Motivating Examples}\label{sec:motivation}
\noindent We first explain how a FCNN can be trained in an ONoC, and then present the motivations of our solution that leverages ONoC to accelerate FCNN training. The example given in Fig. \ref {fig:neural-new} (a) is used to explain the process of FCNN training in an ONoC. 
To speed up training using parallel computation, the neurons in a FCNN can be mapped to multiple cores to execute in parallel on the ONoC, where multiple neurons can be mapped onto the same core. As illustrated in Fig. \ref {fig:neural-new} (a), one epoch of training is divided into multiple periods based on layers, and these periods are executed sequentially. 
In \textit{Period 0}, data and FCNN instructions in main memory are loaded to the distributed SRAM of ONoC cores.
In the subsequent periods, the cores mapped with neurons in the corresponding layer perform computations concurrently and then pass the outputs to the cores mapped with neurons in the next layer through inter-core communications instead of accessing main memory. 
\begin{figure}[!hpbt]
	\vspace{-1mm}
	\centering
	\includegraphics[scale=0.35]{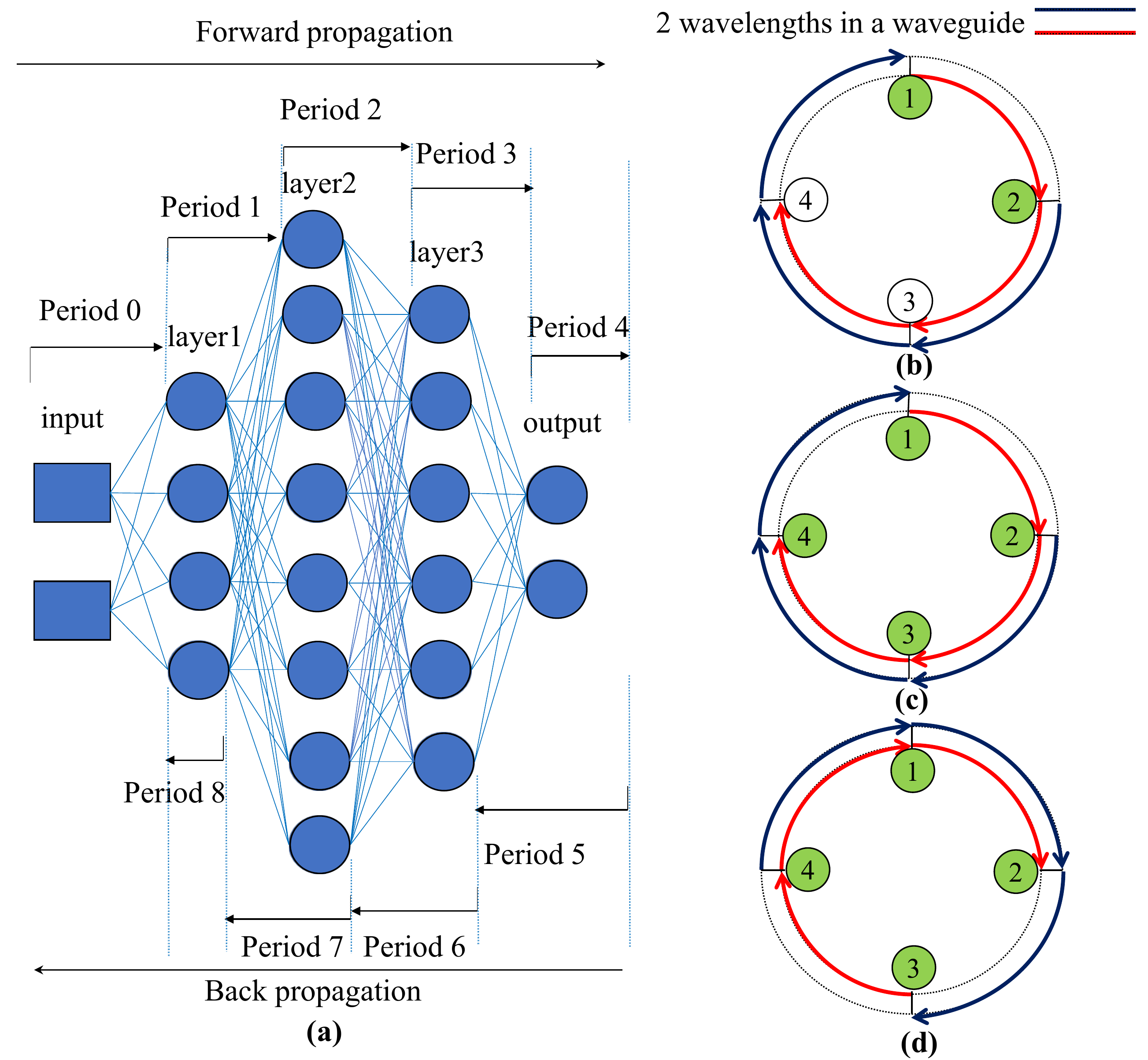}
	{
		\vspace{-1mm}
		\caption{Illustration of FCNN training on ONoC; (a) Periods of FCNN; (b) {\bf Scheme 1}: assigning 2 cores (1 and 2) for Period 1 and 4 cores (1, 2, 3, 4) for Period 2, with data in Core 1 and 2 transmitted to other cores by 2 wavelengths concurrently; (c) {\bf Scheme 2}: assigning 4 cores (1, 2, 3, 4) for both Period 1 and 2, with data in Core 1 and 2 transmitted to other cores in the first time slot; and (d) data in Core 3 and 4 transmitted to other cores in the second time slot.}
		\label{fig:neural-new}
	}
\vspace{-3mm}
\end{figure}

With such a mapping of FCNN training on ONoC, we illustrate our motivations by the following examples:

\noindent\textbf{Example I}: This example is to illustrate \textit{FCNN training can be accelerated by exploiting optical transmission for data exchange between cores in adjacent periods}. When a FCNN is trained in a traditional multi-core system, the outputs of each neuron, weights, biases and gradients in the current execution stage may be written to the main memory for further processing in the subsequent execution stage. When the neurons in two adjacent periods are not mapped to the same core, the data needed for the computation in the next period may be loaded from the main memory to the corresponding cores. If the cores are reading the same parameters in the main memory, more delay will occur due to bank or port conflict in the main memory. 
These frequent read and write operations in main memory will cause relatively longer delay, thereby slowing down the FCNN training process with high energy cost. 
For instance, it takes around 180 cycles to access main memory in three level cache intel i7 CPU~\cite{levinthal2009performance} and takes about 30 cycles and 375 cycles in the shared memory and main memory of NVIDIA Volta GPU architecture~\cite{jia2018dissecting}. 
However, the off-chip memory access for neural network training can be alleviated by 94.1\% on average using on-chip inter-core communication~\cite{chen2019noc}.
The average latency of inter-core communication ranges from 19 to 91 cycles in ENoC~\cite{catania2006methodology} and 3 to 7 cycles in ONoC~\cite{morris20123d}, which are much faster than main memory access. Obviously, in comparison with ENoC, ONoC has much more potential to reduce the amount of time required for data communication during FCNN training.
Moreover, by leveraging the WDM technology, multiple cores can transmit and receive data concurrently through waveguide as long as they use different wavelengths. 
This can further reduce the amount of time for data communications between neurons in adjacent layers. 

\vspace{0 mm}
\noindent\textbf{Example II}: This example is to illustrate \textit{there is a trade-off between computation and communication for FCNN training on ONoC, which allows the use of optimization techniques to further reduce the amount of training time}. Suppose that the FCNN given in Fig. \ref {fig:neural-new} (a) is trained on an ONoC with four cores connected by ring topology, and assume that the number of available wavelengths is 2. 
We use the execution of \textit{Period 1} and \textit{Period 2} as an example, and consider the following two mapping schemes between neurons and cores: (1) \textbf{Scheme 1}: \textit{assigning 2 cores for Period 1 and 4 cores for Period 2}. The four neurons in \textit{Period 1} are mapped to \textit{Core 1} and \textit{2} with each core having two neurons, and the eight neurons in \textit{Period 2} are mapped to \textit{Core 1, 2, 3} and \textit{4} with each core having two neurons. Since only 2 wavelengths are available, \textit{Core 1} and \textit{Core 2} can send their data to the other cores in parallel using 2 wavelengths (represented by the red and blue curved arrows), as illustrated in Fig. \ref {fig:neural-new} (b). (2) \textbf{Scheme 2}:  \textit{assigning 4 cores for both Period 1 and Period 2}.  Since only two cores can send in parallel at one time, TDM has to be used costing two time slots. In the first time slot, as illustrated in Fig. \ref {fig:neural-new} (c), \textit{Core 1} and \textit{Core 2} can send their data to other cores in parallel using 2 wavelengths; In the second time slot, \textit{Core 3} and \textit{Core 4} can send their data to the other cores in parallel reusing the 2 wavelengths, as illustrated in Fig. \ref {fig:neural-new} (d). Hence, Scheme 1 uses fewer cores in \textit{Period 1} spending more time on computation with less time on communication than Scheme 2.  During the back-propagation, cores can be reused according to a data locality constraint (to be defined in Eqs. (11) in Section 3.1.3), so that each neuron is mapped
to the same core in the FP and BP to maximize data locality. For example, the cores assigned to \textit{Period}
1 in FP process (e.g. \textit{Core 1} and \textit{2} in Scheme 1) are the same as that assigned to Period 8 in BP process, while the cores assigned to \textit{Period 2} in FP process (e.g. \textit{Core 1, 2, 3} and \textit{4} in Scheme 1) are the same as that assigned to \textit{Period 7} in BP process. Therefore, the senders (\textit{Core 1} and \textit{2}) in \textit{Period 1} become the receivers in Period 8 and the receivers (\textit{Core 1, 2, 3 and 4}) in \textit{Period 2} become the senders in \textit{Period 7}. 
It can be seen that using more cores can reduce the computation time, but incur more time on communication, which shows
there is a trade-off between computation and communication in the mapping between neurons and cores. 
 \textit{The challenging problem is: to train a given FCNN on an ONoC with a limited number of available wavelengths,  how to map the neurons to cores for different periods so that the total amount of time for FCNN training is minimized.} 
To tackle this challenge, we propose a FCNN acceleration model for ONoC that explores the trade-off between computation and communication to calculate the optimal number of cores for each period and minimize the FCNN training time.

\vspace{-1 mm}
\section{FCNN Acceleration Schemes on ONoC}\label{sec:acceleration}
In this section, we first present our FCNN accelerating model, and then derive the optimal number of cores for each period to minimize the total training time. 


\vspace{-2mm}
\subsection{Parallel Accelerating Model on ONoC} \label{sec:model}

\noindent We consider the training of a FCNN with $l+1$ layers labeled from layer 0 to layer $l$, where layer 0 and layer $l$ are the input and output layer, respectively. The number of neurons in layer $i$ is represented by $n_i$ for $i\in[0, l]$. As illustrated in Fig.~\ref {fig:neural-new} (a), one epoch of training is divided into multiple periods based on layers. The FP process is divided into $l+1$ periods labeled from \textit{Period 0} to \textit{Period $l$}, and the BP process is divided into another $l$ periods labeled from \textit{Period $l+1$} to \textit{Period $2l$}. We assume the ONoC contains $m$ cores connected by ring topology, and the number of available wavelengths is $\lambda_{max}$. We focus on the acceleration of one epoch during FCNN training because each epoch is repetitive. Note that all FCNN parameters and intermediate values are stored in SRAM of the corresponding cores distributively, with these parameters staying in the corresponding SRAM during one epoch of training. Cores used in different layers  exchange data by optical communications in ONoC.  


\subsubsection {Computation Cost} 
We use $m_i$ to represent the number of cores assigned to \textit{Period i} for parallel computation, and assume that the neurons are evenly mapped to the $m_i$ cores in each period. In the FP process, \textit{Period 0} is used to map neurons to cores and load the training data to the corresponding cores from the main memory. Hence, computation starts from \textit{Period 1}.
According to the definition of periods, the neurons in layer $i$ where $i\in[1, l]$ get involved in \textit{Period i} in the FP process. The neurons in layer $2l-i+1$ where $i\in[l+1, 2l]$ get involved in \textit{Period i} during the BP process. Therefore, the corresponding neuron number $n_{i}$ in FP process is the same as $n_{2l-i+1}$ in the BP process. Let $X_i$ be the number of neurons mapped to each core in \textit{Period i} during the FCNN training. We have 
\begin{equation}
X_{i}=\begin{cases} ~~\left\lceil\frac{n_{i}}{m_{i}}\right\rceil , i\in[1,l] ;\\ \left\lceil\frac{n_{2l-i+1}}{m_{i}}\right\rceil , i\in[l+1,2l] .\end{cases} \label{eq:neuron}
\end{equation}

Assume that all cores are homogeneous with the same computation capacity $C$. 
In the FP process, the computations consist of multiply-accumulate operations in the corresponding functions at each layer. We use $\alpha_i$ to represent the amount of each neuron computation in \textit{Period i} of the FP process. When the batch size (i.e., the number of samples in one training epoch) is larger than one, $\alpha_i$ is the amount of computation for each neuron in \textit{Period i}  to process all samples in the current training. Then the amount of computation time at each core in \textit{Period i} of the FP process is $\frac{\alpha_i X_i}{C}$. In the BP process, the computation is dominated by the updates on weight and bias that consists of differentiation operations. 
For each neuron in \textit{Period i} of the BP process, it needs to update the weights for the connections to all neurons in \textit{Period i-1}. We use $\beta_i$ to represent the amount of computation to update the weight of one connection based on all training samples according to Eqs. (2) and (3). 
Since the gradient calculation related to the neurons in the next layer, the amount of computation updating weight at each core in \textit{Period i} of the BP process is $\beta_iX_{i}n_{2l-i}$. Unlike weight update, the bias is updated on a per-neuron basis. As updating a weight and updating a bias have the same complexity, the amount of computation for bias update in \textit{Period i} of the BP process is $\beta_iX_{i}$.  Therefore, the computation time cost for \textit{Period i} during the BP process is $\frac{\beta_i X_{i} (n_{2l-i}+1)}{C}$. 

Let $f(m_{i})$ represent the amount of  computation time required for each of the $m_i$ cores in \textit{Period i}. We have 
\begin{equation}
f(m_i)=\begin{cases} ~~\frac{\alpha_i X_i}{C}, i\in[1,l] ;\\ \frac{\beta_i X_{i} (n_{2l-i}+1)}{C} , i\in[l+1,2l] .\end{cases} \label{eq:f1_function_b}
\end{equation}

\vspace{-1mm}
\subsubsection {Communication Cost}
In the FP process, the outputs of the neurons in layer $i$ need to be propagated to neurons in layer $i+1$, whereas  the gradients computed at each neuron in layer $i$ need to be back propagated to all neurons in layer $i-1$ in the BP process. As illustrated in Fig. \ref {fig:neural-new} (b), the propagation of neuron outputs and gradients can be effectively implemented in a ring-based ONoC using the broadcast operation, that is, if one core in the current period sends data along the ring, the cores assigned to the next period can receive the data by filtering a small portion of the transmitted optical signal. By leveraging the WDM technology, the communications in each period can be parallelized by letting multiple cores transmit simultaneously using different wavelengths.  For \textit{Period i}  that demands communications, all the $m_i$ cores can transmit concurrently if $m_i\leq \lambda_{max}$; otherwise TDM needs to be used to complete the transmissions from the $m_i$ cores. We use $B_i$ to denote the amount of time for one core in \textit{Period i} to complete the communications, which can be calculated based on the amount of data to be transmitted, the read and write operations for cache access, and the O/E and E/O conversions. Let $g(m_{i})$ be the total amount of time required to complete communications in \textit{Period i}. We have
\begin{equation}
g(m_i)=\begin{cases} 0,\quad\qquad\qquad i=1, l \textrm{ and }2l ;\\ \left\lceil\frac{m_{i}}{\lambda_{max}}\right\rceil B_i, \quad\textrm{otherwise} .\end{cases} \label{eq:communicationcost}
\end{equation}
 $g(m_{1})=g(m_{l})=g(m_{2l})=0$ because there is no communication in these periods. 

\vspace{1mm}
\subsubsection {Problem Formulation} \label{sec:formulation}
Our objective is to compute the optimal number of cores allocated to each period, so that the training time required for one epoch can be minimized. The total amount of time to complete one epoch training, denoted by $T$, includes the time for FP ($Time_{fp}$) and the time for BP ($Time_{bp}$). Based on our acceleration model, it can be calculated as follows:
\begin{equation}\label{eq:objectivefunction}
\begin{aligned}
T&=Time_{fp} + Time_{bp} \\
&=D_{input}+\sum_{i=1}^{2l}\Big(f(m_i)+g(m_i)+\zeta_{i}\Big) ,
\end{aligned}
\end{equation}
where $D_{input}$ represents the time delay caused by loading input data and FCNN instructions from the main memory to the assigned cores in \textit{Period 0}, and $\zeta_i$  represents the additional delay caused by extra main memory access, software overhead, synchronization, and etc. in the $i$th period. We formulate our problem as an optimization problem as follows: 
\begin{equation}
\label{eq:objective}
\textbf{Minimize } ~T,
\end{equation}
and the optimization subjects to the following constraints. 

\begin{itemize}
\item {\bf Constraint on the size of ONoC:} The number of assigned cores $m_i$ in each period is no larger than $\phi m$:
\begin{equation}
\label{eq:c1}
   m_i \leq \phi m ,  i\in[1,2l], \phi\in(0, 1] , \\
\end{equation}
where $\phi$ is used to control the utilization of cores according to the system limitations such as signal crosstalk and power loss. As higher utilization of cores requires longer optical transmission path through more optical elements leading to higher crosstalk and power loss, $\phi$ can be predefined according to the ONoC system as a threshold to satisfy the crosstalk and power loss limitations.
\item {\bf Constraint on the size of FCNN:} The number of assigned cores in each period cannot be larger than the number of the neurons in that period:
\begin{equation}
\label{eq:c2}
m_i \leq n_{i} , i\in[1,l] \textrm{ and }  m_i \leq n_{2l-i+1} , i\in[l+1,2l] .
\end{equation}
\item {\bf Constraint on data locality:} each neuron is mapped to the same core in the FP and BP to maximize data locality. Hence, the number of cores assigned to \textit{Period i} in FP process is the same as that assigned to \textit{Period 2l-i+1} in BP process:
\begin{equation}
\label{eq:c4}
    m_{2l-i+1}=m_i  , i\in[1,l] . 
\end{equation} 
\end{itemize}

\vspace{-1mm}

\subsection{Optimal Solution} \label{sec:random}

\noindent In this section, we derive the optimal solution for the optimization problem formulated in Eq. (\ref{eq:objective}).
%
%
\begin{lemma}\label{lemma:optm}
	The optimal number of cores allocated to \textit{Period i}, denoted by $m^{*}_i$  is 
\begin{align}\nonumber
m^{*}_i=\begin{cases} \min\{\lceil\sqrt{\frac{\theta_i}{B_iC}}\rceil, \phi m\}, & i=1;\\  \min\{\lceil\sqrt{\frac{\theta_i}{(B_i+B_{2l-i+1})C}}\rceil, \phi m\}, &  1< i < l;\\ \min\{\lceil\sqrt{\frac{\theta_i}{B_{i+1}C}}\rceil, \phi m\}, & i=l;\end{cases} \label{eqn:ip_scheduling_variables}
\end{align}
where $\theta_i = n_{i}\lambda_{max}[\beta_{2l-i+1}(n_{i-1}+1)+\alpha_{i}]$.
\end{lemma}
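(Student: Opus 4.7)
The plan is to exploit the data locality constraint~(\ref{eq:c4}) to pair each forward period $i\in[1,l]$ with its backward counterpart $2l-i+1$, which collapses the $2l$ decision variables into $l$ independent ones. After substitution the objective~(\ref{eq:objectivefunction}) decouples into $l$ one-variable convex problems, each of which admits a closed-form minimiser of the stated form.

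First I would substitute $m_{2l-i+1}=m_i$ for $i\in[1,l]$ in Eq.~(\ref{eq:objectivefunction}) and discard the $m_i$-independent terms $D_{input}$ and $\sum_i\zeta_i$. What remains is a sum of $l$ pairwise contributions, one per pair $(i,2l-i+1)$. Substituting Eq.~(\ref{eq:f1_function_b}) into each pair, using $n_{2l-(2l-i+1)}=n_{i-1}$, and relaxing the ceiling $\lceil n_i/m_i\rceil$ to $n_i/m_i$, the two computation contributions combine into
\begin{equation*}
\frac{n_i\bigl[\alpha_i+\beta_{2l-i+1}(n_{i-1}+1)\bigr]}{Cm_i}=\frac{\theta_i}{\lambda_{max}Cm_i},
\end{equation*}
which is exactly how the numerator $\theta_i$ of the lemma arises. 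Applying Eq.~(\ref{eq:communicationcost}) and relaxing $\lceil m_i/\lambda_{max}\rceil$ to $m_i/\lambda_{max}$, the paired communication cost becomes linear in $m_i$ with slope $\kappa_i/\lambda_{max}$, where $\kappa_i=B_i$ for $i=1$, $\kappa_i=B_i+B_{2l-i+1}$ for $1<i<l$, and $\kappa_i=B_{l+1}$ for $i=l$. The boundary cases reflect the fact that $g(m_l)$ and $g(m_{2l})$ vanish, so only the non-vanishing partner in each of those pairs contributes to the sum.

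Each pair then produces the strictly convex one-variable function $h_i(m_i)=\frac{\theta_i}{\lambda_{max}Cm_i}+\frac{\kappa_i m_i}{\lambda_{max}}$ on $(0,\infty)$. The first-order condition $h_i'(m_i)=0$ gives $m_i^2=\theta_i/(\kappa_i C)$, and $h_i''(m_i)=2\theta_i/(\lambda_{max}Cm_i^3)>0$ certifies that $\sqrt{\theta_i/(\kappa_i C)}$ is the unique global minimiser; substituting the three cases for $\kappa_i$ reproduces the three branches of the lemma verbatim. Integrality is then restored by rounding up, constraint~(\ref{eq:c1}) caps the answer by $\phi m$, and separability of the reduced objective guarantees that the resulting pair-wise minimisers jointly minimise $T$.

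The main obstacle I anticipate is discharging the two ceilings rigorously rather than by the continuous relaxation used above. Since $\lceil n_i/m_i\rceil$ is convex nonincreasing in $m_i$ and $\lceil m_i/\lambda_{max}\rceil$ is a nondecreasing step function, each integer minimiser differs from its continuous counterpart by at most one unit, which justifies rounding the square-root expression up. A secondary subtlety is constraint~(\ref{eq:c2}), $m_i\leq n_i$, which the lemma leaves implicit: whenever $\sqrt{\theta_i/(\kappa_i C)}$ exceeds $n_i$ the true optimum should be $n_i$ rather than $\phi m$, so strictly speaking the outer $\min$ should also contain $n_i$. Under the computation-dominated regime this analysis targets, that bound is slack and the statement of the lemma is recovered.
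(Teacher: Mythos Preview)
Your proposal is correct and follows essentially the same route as the paper: relax the ceilings to obtain a continuous objective, use the data-locality constraint~(\ref{eq:c4}) to reduce to the $l$ variables $m_1,\dots,m_l$, set the partial derivative with respect to each $m_i$ to zero (splitting into the same three boundary cases $i=1$, $1<i<l$, $i=l$), solve the resulting quadratic, then round up and cap by $\phi m$. The paper's proof is terser---it simply writes out $\partial T/\partial m_i$ in each case---whereas you additionally verify convexity via $h_i''>0$, discuss the effect of restoring the ceilings, and flag that constraint~(\ref{eq:c2}) is silently dropped from the final $\min$; these are valid refinements but do not change the argument.
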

\begin{proof}
According to Eq. (\ref{eq:objectivefunction}), $T$ is a multi-variable function with variables $m_1$, $m_2$, $\cdots$, $m_l$. Assuming each $m_i$ where $i\in[1,l]$ is real number and the ceiling operators in all equations are removed, $T$ becomes a continuous function. Hence, $T$ is minimized when 
\begin{equation}\label{lemma:proof}
	\frac{\partial T}{\partial m_i} =0, \forall i\in[1, l].
\end{equation}
\textbf{Case I}: When $i=1$, only $f(m_1)$, $f(m_{2l})$, and $g(m_1)$ are functions of $m_1$ and $f(m_{2l})=0$. According to Eqs. (\ref{eq:f1_function_b}), (\ref{eq:communicationcost}) and (\ref{eq:objectivefunction}), 
\begin{equation}
\begin{aligned}\nonumber
&\frac{\partial T}{\partial m_1} =\frac{\partial f(m_1)}{\partial m_1}+\frac{\partial f(m_{2l})}{\partial m_1}+\frac{\partial g(m_1)}{\partial m_1}\\
&= -\frac{n_1[\alpha_1+\beta_{2l}(n_0+1)]}{Cm^2_1} +\frac{B_1}{\lambda_{\max}} .
\end{aligned}
\end{equation}
Let $\frac{\partial T}{\partial m_i} =0$, then $T$ is minimized when 
\begin{equation} m_1=\sqrt{\frac{n_{1}\lambda_{max}\big[\beta_{2l}(n_0+1)+\alpha_1\big]}{B_1C}}.
\end{equation}
\textbf{Case II}: When $2\leq i\leq l-1$, only $f(m_i)$, $f(m_{2l-i+1})$, $g(m_i)$ and $g(m_{2l-i+1})$ are functions of $m_i$. Similarly,
let $\frac{\partial T}{\partial m_i} =0$, then $T$ is minimized when
\begin{equation} m_i=\sqrt{\frac{n_{i}\lambda_{max}\big[\beta_{2l-i+1}(n_{2l-i}+1)+\alpha_i\big]}{(B_i+B_{2l-i+1}) C}} .
\end{equation}
\textbf{Case III}: When $i=l$, only $f(m_l)$, $f(m_{l+1})$, and $g(m_{l+1})$ are functions of $m_l$ and $g(m_{l})=0$. Similarly, $T$ is minimized when
\begin{equation} m_l=\sqrt{\frac{n_{l}\lambda_{max}\big[\beta_{l+1}(n_{l-1}+1)+\alpha_l\big]}{B_{l+1} C}} .
\end{equation}
From the above three cases, we use $\theta_i$ to represent the numerator of $m_i$ so that we have $\theta_i = n_{i}\lambda_{max}[\beta_{2l-i+1}(n_{i-1}+1)+\alpha_{i}]$.
Since $m_i$ for $i\in[1,l]$ must be an integer and it must be no larger than $\phi m$ according to the constraint given in (\ref{eq:c1}),  $m^*_{i} =\min(\lceil m_i \rceil, \phi m)$. Hence, this lemma holds. 
\end{proof}

\begin{theorem}
	Given a FCNN with $l+1$ layers with layer $i$ having $n_i$ neurons, the minimum amount of time to perform one epoch training on an ONoC with $m$ cores and $\lambda_{\max}$ wavelengths is $T^* =D_{input}+\sum_{i=1}^{2l}\Big(f(m^*_i)+g(m^*_i)+\zeta_{i}\Big)$.
\end{theorem}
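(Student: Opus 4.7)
The plan is to obtain $T^*$ simply by substituting the optimal per-period allocations from Lemma~1 into the objective in Eq.~(7), after eliminating the redundant BP-side variables. First I would invoke the data-locality constraint (Eq.~11), $m_{2l-i+1}=m_i$ for $i\in[1,l]$, which reduces the design space from $2l$ variables to $l$; thereafter $T$ depends only on $m_1,\dots,m_l$ together with the constants $D_{input}$ and $\sum_i \zeta_i$.

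Next I would verify that the continuous relaxation of $T$ is separable and strictly convex in each remaining variable, so that Lemma~1 delivers not merely a critical point but the unique global minimizer. Dropping the ceilings in Eqs.~(4) and (6), both $f(m_i)$ and $f(m_{2l-i+1})$ collapse to the form $c/m_i$ with $c>0$, since $X_i=n_i/m_i$ and, under data locality, $X_{2l-i+1}=n_i/m_i$ as well; meanwhile $g(m_i)$ and $g(m_{2l-i+1})$ become linear in $m_i$. Each $m_i$ therefore contributes a sum of a strictly convex decreasing $c/m_i$ term and a nonnegative linear term, which is strictly convex on $(0,\infty)$; different $m_i$'s contribute independently, so $T$ is a sum of strictly convex univariate functions, and the first-order computation in Lemma~1 pinpoints the unique continuous minimizer.

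Third, I would enforce the integrality and the box constraints of Eqs.~(9)--(10). By strict convexity on each coordinate, the integer-constrained minimum of the $i$th contribution is attained at $\min\{\lceil m_i\rceil,\phi m\}$, provided $\phi m\le n_i$ (the regime the authors implicitly target; otherwise one tightens the upper cap to $\min\{\phi m, n_i\}$). Setting $m_i^*$ as in Lemma~1 for $i\in[1,l]$ and $m_{2l-i+1}^*=m_i^*$ for the matching BP periods, and substituting into Eq.~(7), yields
\begin{equation*}
T^* \;=\; D_{input} + \sum_{i=1}^{2l}\bigl(f(m_i^*)+g(m_i^*)+\zeta_i\bigr),
\end{equation*}
which is the claimed expression. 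The step I expect to be most delicate is arguing that the $\lceil\cdot\rceil$-and-$\min$ operation preserves optimality in the presence of all three constraints simultaneously; convexity and separability are what make this routine, and I would state them explicitly rather than let them be inherited silently from Lemma~1.
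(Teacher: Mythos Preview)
Your proposal is correct and follows essentially the same route as the paper: invoke Lemma~1 to obtain the per-period optimizers $m_i^*$ and substitute them into Eq.~(7). The paper's own proof is a two-sentence version of this, so your added justification of separability, strict convexity, and the handling of integrality and box constraints simply supplies rigor that the paper leaves implicit.
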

\begin{proof}
According to the proof for Lemma \ref{lemma:optm}, the amount of time for one epoch training $T$ is minimized when $m_i=m^*_i$ for each $i\in[1,2l]$. Hence, by replacing $m_i$ with $m^*_i$ in Eq. (\ref{eq:objectivefunction}), the minimum amount of time for epoch training is 
$T^* =D_{input}+\sum_{i=1}^{2l}\Big(f(m^*_i)+g(m^*_i)+\zeta_{i}\Big)$.
%
\end{proof}

\vspace{-3mm}

\section{Allocation of Cores on ONoC}\label{sec:mapping}
\noindent After the optimal number of cores required for each period is derived, the next step is to investigate the core allocation for different periods on ONoC. 
In the following, we first present three mapping strategies and then discuss their advantages and disadvantages with the analysis of hotspot level, memory requirement and state transitions.  



\begin{figure}[!hpbt]
	\vspace{-2mm}
	\centering
	\includegraphics[scale=0.27]{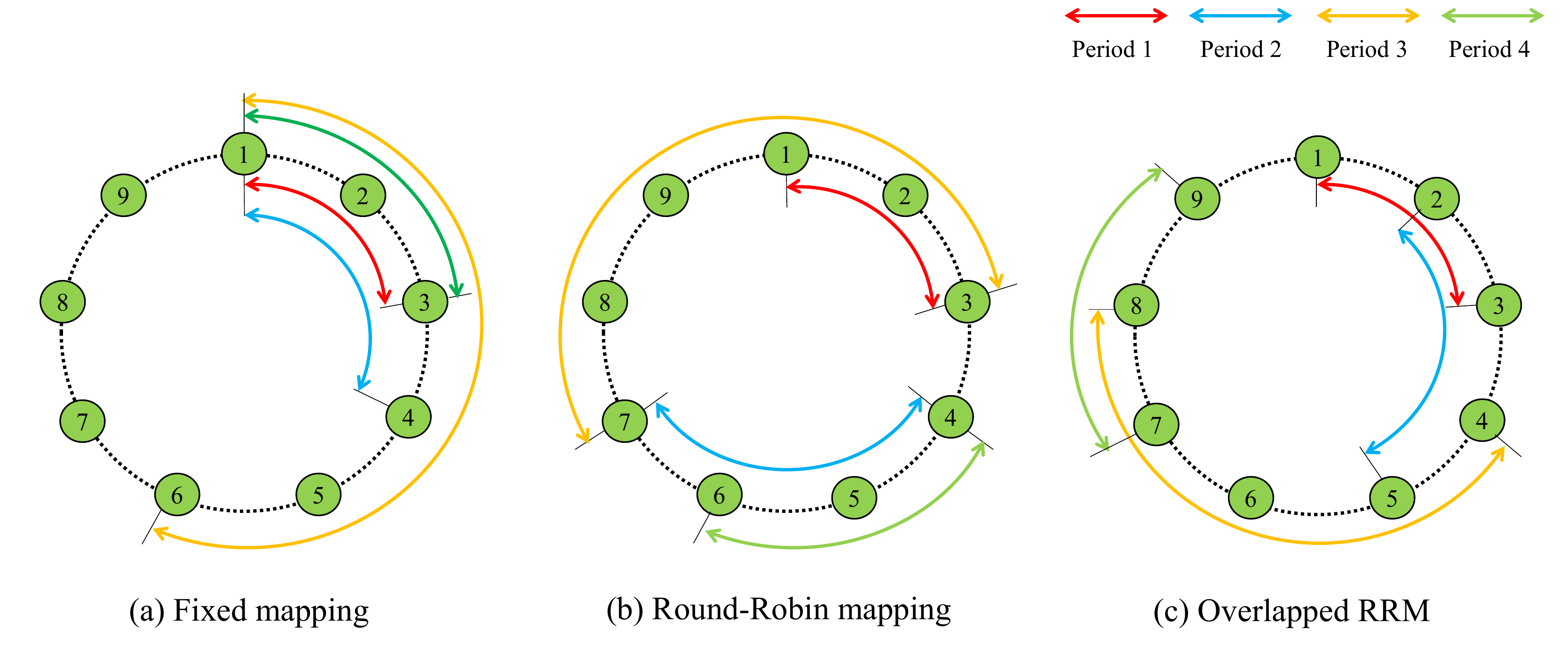}
	{
		\vspace{-4mm}
		\caption{Three mapping strategies}
		\label{fig:three_mapping}
	}
\vspace{-3mm}
\end{figure}
\vspace{-2mm}

\subsection{Three Mapping Strategies}

\noindent \textbf{Strategy I: Fixed Mapping (FM).} The cores for each \textit{Period i} are mapped sequentially along the ring in clockwise order, always starting from a fixed core (e.g. $core_1$). That is, the cores allocated to \textit{period i} is  $[core_1, core_2, ... , core_{m^*_i}]$. For example, for a 5-layer FCNN to be trained on an ONoC with 9 cores, assume $m^*_1=3$, $m^*_2=4$, $m^*_3=5$, and $m^*_4=3$ according to our model. 
As illustrated in Fig. \ref{fig:three_mapping} (a), $[core_1, core_2, core_3]$ are assigned to \textit{Period 1},  $[core_1, core_2, core_3, core_4]$ are assigned to \textit{Period 2},   $[core_1, core_2, core_3, core_4, core_5]$ are assigned to \textit{Period 3}, and $[core_1, core_2, core_3]$ are assigned to \textit{Period 4}. 
The advantages of this strategy include: (1) \textit{fewer messages to receive} due to the core reuse, even though the number of transmitted messages remains unchanged. For example, $core_1$, $core_2$, and $core_3$ are reused in \textit{Period 1} and \textit{Period 2}. Each time one of these three cores transmit a message, only three of the cores assigned to \textit{Period 2} need to receive the message. If there is no core reuse, all four cores assigned to \textit{Period 2} need to receive the message; (2) \textit{less energy consumption} due to the low frequency to change state of the cores and optical routers. For example, $core_1$, $core_2$ and $core_3$ are used for all periods, and they just need to change state one time in each epoch training;
(3) \textit{less crosstalk and insertion loss} since the maximum path length during the whole FCNN training is $\max_{i=1}^{l}m^*_i-1$ (e.g. 5 in the example).  However, this method also has several disadvantages: (1) \textit{high memory requirement} because the reused cores need to store the parameters for all involved periods; (2) \textit{hot-spot and unbalanced thermal dissipation} because some cores keep working during the training process.   

\noindent \textbf{Strategy II: Round-Robin Mapping (RRM).} The cores for each \textit{Period i} are mapped sequentially along the ring, starting from the core next to the last core allocated to \textit{Period i-1}. As illustrated in Fig. \ref{fig:three_mapping} (b), $[core_1, core_2, core_{3}]$ are assigned to first period, $[core_{4}, core_{5}, core_{6}, core_{7}]$ are assigned to the second period, and so on. Different from FM, there are no reused cores in adjacent periods in RRM. The merits of the RRM are: (1) \textit{hot-spot avoidance and balanced thermal dissipation} due to the nature of round-robin selection; (2)\textit{ less memory requirement} since the cores are involved in fewer periods, thus only the parameters of the involved periods are stored. However, it also has some drawbacks including: (1) \textit{more messages to receive} since there is no core reuse between adjacent period; (2)\textit{ more energy consumption} due to the frequent state changes at the cores and optical routers. For example, different from FM, $[core_1, core_2, core_{3}]$ in RRM are active before \textit{Period 1} starts, and then become idle after \textit{Period 1} is finished. $[core_{4}, core_{5}, core_{6}, core_{7}]$ are active before \textit{Period 2} and receives the message from $[core_1, core_2, core_{3}]$, then become idle in the end of \textit{Period 2} and so on.  

\noindent \textbf{Strategy III: Overlapped Round-Robin Mapping (ORRM).} To overcome the drawbacks of both FM and RRM, we propose a third mapping scheme called Overlapped Round-Robin Mapping, which is similar to RRM strategy but allows core reuse in the adjacent periods. As illustrated in Fig. {\ref{fig:three_mapping}} (c), $[core_1, core_2, core_{3}]$ are assigned to \textit{Period 1} and $[core_{2}, core_{3}, core_{4}, core_{5}]$ are assigned to \textit{Period 2}, where $core_{2}$ and $core_{3}$ are reused in these two periods.  

Let $r_i$ be the number of cores reused in \textit{Period i} and \textit{Peirod i-1}. To balance core reuse, we define the expected number of reused cores in two adjacent periods, denoted by $E[r]$, as
 \begin{equation}
E[r] = \begin{cases}
 0, \qquad \qquad \textrm{ if }  \sum_1^{l}{m^*_i} \leq m;\\
\frac{\sum_1^{l}{m^*_i}-m}{l-1},  \textrm{ otherwise}. \end{cases} \label{eq:ol}
\end{equation}
When $\sum_1^{l}{m^*_i} \leq m$, no core will be reused to reduce the memory requirement, which is equivalent to RRM strategy. When $\sum_1^{l}{m^*_i} > m$, some cores are reused, which are expected to be evenly distributed in different periods to balance hot-spot and thermal dissipation. Based on only $E[r]$, it might occur that the expected number of cores to be reused in \textit{Period i} is larger than the optimal number of cores allocated to \textit{Period i} and a core reused in \textit{Period i}  may be further reused in the following periods.
To deal with these issues, we define reused core number $r_i$ as follows: 
 \begin{equation}
\label{eq:shift}
r_i= \min\big(round(E[r]), (m^*_{i-1}-r_{i-1}), m^*_{i}\big), i \in [2,l];
\end{equation}
where $r_1 = 0$. Let $id_i$ be the index of the first core assigned to \textit{Period i} and set $id_1=1$. Since cores are allocated to \textit{Period i} sequentially along the ring, we have 
\begin{equation}
\label{eq:id}
id_i = 1+\sum^{i}_{2}(m^*_{i-1}-r_i), i \in [2,l].
\end{equation}

\begin{lemma}\label{lemma2}
	In ORRM, the number of consecutive periods that a core can run is at most four in one epoch FCNN training, when $m^*_i+m^*_{i+1}-r_{i+1}\leq m, \forall i \in [1,l-1]$. 
\end{lemma}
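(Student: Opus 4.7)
The plan is to work directly with the index arithmetic of Eqs.~(\ref{eq:shift})--(\ref{eq:id}) and then glue the FP and BP halves together via the data-locality constraint~(\ref{eq:c4}). For each period $i$ I would represent the cores assigned to Period $i$ by a contiguous arc $A_i$ of length $m^*_i$ on the $m$-core ring, starting at position $id_i$. Under the stated hypothesis $m^*_i+m^*_{i+1}-r_{i+1}\leq m$, the arcs $A_i$ and $A_{i+1}$ together occupy at most $m$ distinct cores, so the ring cannot produce a second, wrap-around overlap; consequently $A_i\cap A_{i+1}$ is exactly the right tail of $A_i$ of size $r_{i+1}$, which coincides with the left head of $A_{i+1}$.

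The key step is to rule out three consecutive periods inside the FP half $\{1,\ldots,l\}$. If $c\in A_i\cap A_{i+1}\cap A_{i+2}$, the first intersection forces $c$ into the first $r_{i+1}$ positions of $A_{i+1}$, while the second forces $c$ into the last $r_{i+2}$ positions of $A_{i+1}$; these two sub-arcs overlap only when $r_{i+1}+r_{i+2}\geq m^*_{i+1}+1$. But Eq.~(\ref{eq:shift}) gives $r_{i+2}\leq m^*_{i+1}-r_{i+1}$, i.e.\ $r_{i+1}+r_{i+2}\leq m^*_{i+1}$, a contradiction. The same calculation on mirrored indices delivers the analogous ``at most two consecutive'' bound inside the BP half. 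To close the argument I would invoke~(\ref{eq:c4}): since $m_{2l-i+1}=m_i$ and the mapping keeps each neuron on its original core, the cores active in FP Period $i$ coincide with those active in BP Period $2l-i+1$; in particular a run can straddle the boundary between Periods $l$ and $l+1$. However, extending it past Period $l-1$ on the left or past Period $l+2$ on the right would, after mirroring, create three consecutive occupied periods inside a single half, contradicting the key step. Hence the maximal run has length at most four, realised by Periods $l-1,l,l+1,l+2$.

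The main obstacle I anticipate is the geometric bookkeeping of which \emph{side} of each arc carries the overlap: the identification of $A_i\cap A_{i+1}$ with the right tail of $A_i$ relies crucially on the hypothesis $m^*_i+m^*_{i+1}-r_{i+1}\leq m$, because without it a wrap-around overlap could place a single core simultaneously in the head \emph{and} the tail of $A_{i+1}$, and the contradiction based on Eq.~(\ref{eq:shift}) would collapse. Once that picture is pinned down, the rest of the proof is essentially a substitution plus a symmetric mirror argument through Eq.~(\ref{eq:c4}).
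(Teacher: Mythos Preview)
Your proposal is correct and follows the same high-level structure as the paper's proof: establish that no core is active for more than two consecutive periods within the FP half, mirror this to BP via the data-locality constraint~(\ref{eq:c4}), and then observe that the only way to exceed two is to straddle the FP/BP boundary at Periods $l$ and $l{+}1$, yielding at most four. Where the paper simply asserts that the two-period bound within FP ``can be easily proven,'' you supply the missing mechanism---namely the arc picture together with the inequality $r_{i+2}\leq m^*_{i+1}-r_{i+1}$ from Eq.~(\ref{eq:shift}), which is precisely what blocks $A_i\cap A_{i+1}\cap A_{i+2}\neq\emptyset$---so your argument is a fleshed-out version of the paper's sketch rather than a different route.
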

\begin{proof}
	In the FP training process, when $m^*_i+m^*_{i+1}-r_{i+1}\leq m, \forall i \in [1,l-1]$, any two adjacent periods are allocated to the cores without exceeding one circular round of the ring. It can be easily proven that there is no core running for more than 2 consecutive periods in the FP process. While the cores used in the first period of BP are reused in the last period of FP, the maximum periods that a core can run consecutively is at most 4 in one epoch FCNN training.    
\end{proof}

The pseudo-code of ORRM is given in Algorithm \ref{alg:mapping}. The input includes the optimal number of cores for each period ($m^*_i$, $i\in[1,l]$) and the total number of cores in the ONoC ($m$). The output is a mapping matrix where $M(i,j,k)=1$ if the $j^{th}$ neuron in the $i^{th}$ layer is mapped to core $k$. 
Line (3) and line (8) are executing the mapping of the neurons in each period evenly to the cores assigned to that period.
 \begin{algorithm}
	\SetKwData{Left}{left}
	\SetKwData{This}{this}
	\SetKwData{Up}{up}
	\SetKwFunction{Union}{Union}
	\SetKwFunction{FindCompress}{FindCompress}
	\SetKwInOut{Input}{input}
	\SetKwInOut{Output}{output}
	\caption{Overlapped Round Robin Mapping } \label{alg:mapping}
	\Input{ $m^*_i$ where $i\in [1, l]$  and $m$}
	\Output{$M(i,j,k)$ where $i\!\in\! [1, l]$, $j\in[1, n_i]$, $k\in[1,m]$} 
	\BlankLine
	$sum  = \sum_{i=1}^{l} m^*_i$\;
	Calculate $E[r]$ according to Eq. (\ref{eq:ol}).\\
	Assign [$core_1$, $core_2$, … , $core_{m^*_1}$] to \textit{Period 1};\\
	$k=\lceil \frac{j}{X_1}\rceil$ ;\\
    $M[1, j,k] =1$, where $j\in[1,n_1]$ ;\\
	\For{$i\leftarrow 2$ \KwTo $l$}{
		Calculate $id_i$ according to Eq. (\ref{eq:id})\\
		Assign [$core_{id_i~mod~m}$, $core_{(id_i+ 1)~mod~m}$, $\cdots$, $core_{(id_i + m^*_i-1)~mod~m}$] to \textit{Period i};\\
		$k=\lceil \frac{j}{X_i}\rceil$ ;\\
		$M[i, j,k] =1$, where $j\in[1,n_i]$;
	}
\end{algorithm}

It can be seen that ORRM is a compromised method of FM and RRM. Therefore, ORRM has merits of these two strategies: (1) \textit{less energy consumption and fewer messages to receive}; (2) \textit{less memory requirement}; (3) \textit{hot-spot alleviating and thermal dissipation balancing}. We provide the analysis details as follows. 
%
%
%
%
%
%
%
%
\vspace{-2mm}
\subsection{Hotspot Analysis}
\noindent Hotspots can be caused by non-uniform workload distribution among cores where some cores need to handle relatively higher workload on computation and communication than others. Hence, we analyze hotspots by comparing the maximum number of consecutive periods for cores to run in three mapping strategies.   


\begin{theorem}\label{theorem2}
	The maximum number of consecutive periods for cores to run during one epoch training (FP and BP) is 
	\vspace{-1mm}
	\begin{itemize}
		\item $2l$ in Fixed Mapping;
		\item  $2$ in  Round-Robin Mapping when $m^*_i+m^*_{i+1}\leq m,  \forall i \in [1,l-1]$;
		\item $4$ in ORRM when $m^*_i+m^*_{i+1}-r_{i+1}\leq m, \forall i \in [1,l-1]$.
	\end{itemize}	 
\end{theorem}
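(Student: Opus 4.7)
My plan is to prove each of the three bounds separately, using the same basic ingredient in every case: the data locality constraint in Eq.~(\ref{eq:c4}), which I will interpret (following the discussion of core reuse in Example II) as requiring not merely $m_{2l-i+1}=m_i$ as counts, but that the \emph{same} set of cores $C_i$ serves FP Period $i$ and BP Period $2l-i+1$. Given this, in every strategy the execution of the $2l$ periods is a deterministic sequence, and the question reduces to finding the longest run of consecutive periods whose assigned core sets all contain a common core.

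For Fixed Mapping, every period starts at $core_1$ and extends clockwise, so $core_1 \in C_i$ for all $i \in [1,2l]$. Since the $2l$ periods are executed back to back, $core_1$ runs for the full $2l$ consecutive periods, and no core can do more because there are only $2l$ periods in total.

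For Round-Robin Mapping, I would first show that inside FP alone the hypothesis $m_i^*+m_{i+1}^* \leq m$ implies $C_i \cap C_{i+1} = \emptyset$ for $i\in[1,l-1]$: since the $id$-pointers advance by $m_i^*$ each period along the ring, successive periods occupy disjoint arcs as long as their combined length does not exceed $m$. By the data-locality identification the same disjointness holds throughout BP. The only place a core is reused between adjacent periods is therefore the FP/BP junction, where $C_{l+1}=C_{l}$ by Eq.~(\ref{eq:c4}); any core in $C_l$ runs for exactly $2$ consecutive periods (Period $l$ and Period $l+1$), matching the bound. The delicate step here is checking that the FP/BP stitching is indeed the unique reuse point and that it cannot compose with any in-FP or in-BP overlap to yield a run of $3$; the disjointness just established rules this out cleanly.

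For Overlapped Round-Robin Mapping the statement is exactly Lemma~\ref{lemma2}, so I would simply invoke it. The main obstacle across the three cases is really the argument for ORRM already encapsulated in Lemma~\ref{lemma2}: one must combine the ORRM in-FP overlap (at most one consecutive reuse in FP, by the same arc-length argument as in RRM but now using $m_i^*+m_{i+1}^*-r_{i+1}\leq m$) with the single FP/BP stitching and the symmetric ORRM in-BP overlap. A core achieving the bound is used in $C_{l-1}\cap C_l$ in FP, hence in $C_{l+1}(=C_l)$ and $C_{l+2}(=C_{l-1})$ in BP, for a run of exactly $4$ periods; the same arc-length bound forbids extending this to a $5$th period on either side.
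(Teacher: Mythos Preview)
Your proposal is correct and follows essentially the same approach as the paper: for FM you exhibit a core active in all $2l$ periods, for RRM you use the arc-length hypothesis to get disjointness of consecutive FP periods and then isolate the FP/BP junction as the unique reuse point, and for ORRM you invoke Lemma~\ref{lemma2}. Your write-up is somewhat more explicit than the paper's in verifying both attainment and the upper bound (e.g., ruling out a run of $3$ in RRM or $5$ in ORRM via the data-locality identification $C_{l+2}=C_{l-1}$), but the underlying argument is the same.
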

\begin{proof}
	
	Fixed Mapping has relatively severe hot-spot situation because there are $\min_{i=1}^{l}m^*_i$ cores that keep running from \textit{Period 1} to \textit{Period 2l}, where there are $m-\max_{i=1}^{l}m^*_i$ cores that keep idle. This can lead to hot-spots with unbalanced thermal dissipation. 
	For Round-Robin Mapping, cores are not reused by two adjacent periods if the number of cores assigned to these two periods does not exceed the total number of cores in the ring (i.e. $ m^*_i+m^*_{i+1}\leq m,  \forall i \in [1,l-1]$), except for those cores running 2 consecutive periods in last period of FP reused in the first period of BP.  
	This round-robin fashion allows the cores to take turns to be active and idle, which maintains relatively balanced thermal dissipation and prevents the occurrence of hot-spots to some extent. 
	For ORRM, cores are reused only in two adjacent periods in FP and BP processes with cores reused at the end of FP and beginning of BP running for at most four consecutive periods according to Lemma \ref{lemma2}. 
	Therefore, the difference of active time among the cores is at most three periods no matter how many periods during the training process, which is not easy to form hot-spots with relatively balanced thermal dissipation.
\end{proof}
\subsection{State Transition Analysis}
\noindent When a FCNN is trained on an ONoC, some cores and their associated optical routers need to stay active for some periods and turn to idle in other periods. Frequent transitions between the active state and idle state can degrade the performance and energy efficiency. 
Therefore, we estimate the number of state transitions in one FCNN training epoch for the three mapping strategies ranking from low (1) to high (3) as in Table \ref{tab:transit}:
\begin{table}[!ht]
	\vspace{-1mm}
	\captionsetup{justification=centering}
	\caption{State Transition Numbers for Three Mapping Methods}
	\label{tab:transit}
	\centering
	\def\arraystretch{2}
	\setlength{\tabcolsep}{1mm}
	\begin{tabular}{ c  c  c}
		\hline 
		\! Name of mapping \! & \! Number of state transitions \! & \! Rank \\
		\hline 
		\! FM \! & \!$2(m^*_1+\sum^{l}_{i=2}|m^{*}_{i}-m^{*}_{i-1}|)$\! & \! 1\\
		
		\! ORRM \!& \!$2(\sum^{2l}_{1}m^{*}_{i}-m^{*}_l-\sum^{2l}_{2}r_{i})$\! & \! 2\\
		
		\! RRM \! & \!$2(\sum^{2l}_{1}m^{*}_i-m^{*}_l)$\! & \! 3\\
		
		\hline
	\end{tabular}
\end{table} 

\subsection{Crosstalk and Insertion Loss}
Insertion loss and signal crosstalk are critical issues for ONoC design. For a specific optical routing path, the insertion loss~$I\!L$~ can be calculated as
\begin{equation}
	\label{eq:il}
	I\!L\!=\!I\!L_{l}\!\times\!(N_{r}\!-\!1)\!+\!I\!L_{r}\!\times\!N_{r}\!+\!I\!L_{eo}\!+\!I\!L_{oe}, 
\end{equation}
where $N_{r}$ is the total number of optical routers in the path, $I\!L_{l}$ and $I\!L_{r}$ are the insertion losses of one optical link and one optical router respectively, and $I\!L_{eo}$ and $I\!L_{oe}$ are the insertion losses of E-O and O-E converters respectively~\cite{phoenixsim}. Since $I\!L_{l}$, $I\!L_{r}$, $I\!L_{eo}$, and $I\!L_{oe}$ are constant parameters for specific optical devices and router structure, the worst-case insertion loss $I\!L_{wc}$ is decided by the maximum length of the routing paths.  Similarly, in terms of signal crosstalk, longer routing path passing through more optical elements can also lead to more crosstalk during transmission. Table \ref{tab:crosstalk} shows the maximum length of routing paths for the three mapping strategies during the FCNN training, indicating that FM has the least insertion loss and crosstalk ranked as (1), and RRM has the most insertion loss and crosstalk ranked as (3):

  
\begin{table}[!ht]
	\vspace{-1mm}
	\captionsetup{justification=centering}
	\caption{Maximum path length for Three Mapping Methods}
	\label{tab:crosstalk}
	\centering
	\def\arraystretch{2}
	\setlength{\tabcolsep}{1mm}
	\begin{tabular}{ c  c  c}
		\hline 
		\! Name of mapping \! & \! Maximum path length \! & \! Rank \\
		\hline 
		\! FM \! & \! ${\max}^{l}_{i=1}(m_i-1)$\! & \! 1\\
		
		\! ORRM \!& \!${\max}^{l}_{i=2}(m_i+m_{i-1}-r_i)$\! & \! 2\\
		
		\! RRM \! & \!${\max}^{l}_{i=2}(m_i+m_{i-1}-1)$\! & \! 3\\
		
		\hline
	\end{tabular}
\end{table} 

\vspace{-2mm}
\subsection{Memory Analysis}

We use $\psi$ to denote the memory requirement for storing one parameter such as weight, bias, gradient, input and output in distributed SRAM of cores and $\mu$ to represent the batch size number in one epoch ($\mu\geq1$). As neurons are involved in both FP and BP processes, we calculate the memory requirement for one neuron. For the FP process, according to Eq. (\ref{eq:f}), the memory requirement to process one neuron in layer $i$ includes $n_{i-1}$ weights, one bias, $n_{i-1}$ inputs, and one output, with a total memory requirement of $2(n_{i-1}+1)\mu \psi$. For the BP process, since the weight and bias have already been stored in the SRAM of the corresponding core, the additional parameters to be stored include one bias gradient, $n_{i-1}$ weight gradients, one learning rate, with a total extra memory requirement of $(n_{i-1}+2)\mu \psi$. Hence, when batch size is equal or lager than 2, the total memory requirement of each neuron on layer $i$ in both FP and BP process can be calculated as $s_i=(3n_{i-1}+4)\mu \psi, i \in [1,l]$.
The memory requirement for SRAM in each core depends on the number of mapped neurons and the mapping strategy, which can be calculated as $\sum^{l}_{i=1}\sum^{n_i}_{j=1}M[i,j,k] \times s_i$. So, the maximum memory requirement among SRAMs of all the cores (worst case) for the mapping strategy is formulated as:
\begin{equation}\label{eq:mapm}
 \max^{m}_{k=1}\sum^{l}_{i=1}\sum^{n_i}_{j=1}M[i,j,k]\times s_i.
\end{equation}
Therefore, we estimate the maximum memory requirement among cores for three mapping strategies by Eq. (\ref{eq:mapm}) ranking from low (1) to high (3), as shown in Table \ref{tab:memory}:
\begin{table}[!ht]
	\bgroup
	\vspace{-1mm}
	\captionsetup{justification=centering}
	\caption{Memory Requirements for Three Mapping Methods}
	\label{tab:memory}
	\centering
	\def\arraystretch{2}
	\setlength{\tabcolsep}{1mm}{
		\begin{tabular}{ c  c  c}
			\hline
			\! Mapping method \! & \! Maximum memory requirement of core \! & \! Rank\\
			\hline
			
			\! RRM \! & \!$\max_{i=1}^{l}\left[\frac{(3n_{i-1}+4) \mu \psi  n_i}{m_{i}^{*}}\right]$\! & \! 1\\
			
			\! ORRM \!& \!$\max^{l-1}_{i=1} \left[\frac{(3n_{i-1}+4) n_i}{m_{i}^{*}} + \frac{(3n_{i}+4) n_{i+1}}{m_{i+1}^{*}}\right]\mu \psi $\! & \! 2\\
			
			\! FM \! & \!$\sum^{l}_{i=1}\frac{ n_i \mu \psi {(3n_{i-1}+4)}}{m^{*}_i}$\! & \! 3\\
			
			\hline
	\end{tabular}}
	\egroup
\end{table}

The memory requirement in TABLE \ref{tab:memory} is calculated under the condition that periods are within one round of the ring (i.e. $\sum_{i=1}^l m^*_i \leq m$ for RRM or $\sum_{i=1}^l (m^*_i-r_i) \leq m$ for ORRM). When periods are covering more than one round of the ring (i.e. $\sum_{i=1}^l m^*_i> m$ for RRM or $\sum_{i=1}^l (m^*_i-r_i)> m$ for ORRM), the calculation needs to add more items according to the number of rounds. Note that the cores only load instructions and data from main memory in the initialization process according to the mapping strategy. When the memory requirement calculated above does not exceed the SRAM capacity of cores, there is no main memory access during the neural network training. Otherwise, additional main memory accesses are required causing extra delay for the training time.

\vspace{-1mm}
\subsection{Routing and Wavelength Assignment}\label{sec:routing}
\noindent We describe the details of routing and wavelength assignment, and then use an example to illustrate the control process in the ring-based ONoC by using RRM. We assume the communication is bidirectional, where transmission direction in FP is clockwise and transmission direction in BP is anticlockwise. 
\begin{figure}[!ht]
	\vspace{-3mm}
	\centering
	\includegraphics[scale=0.5]{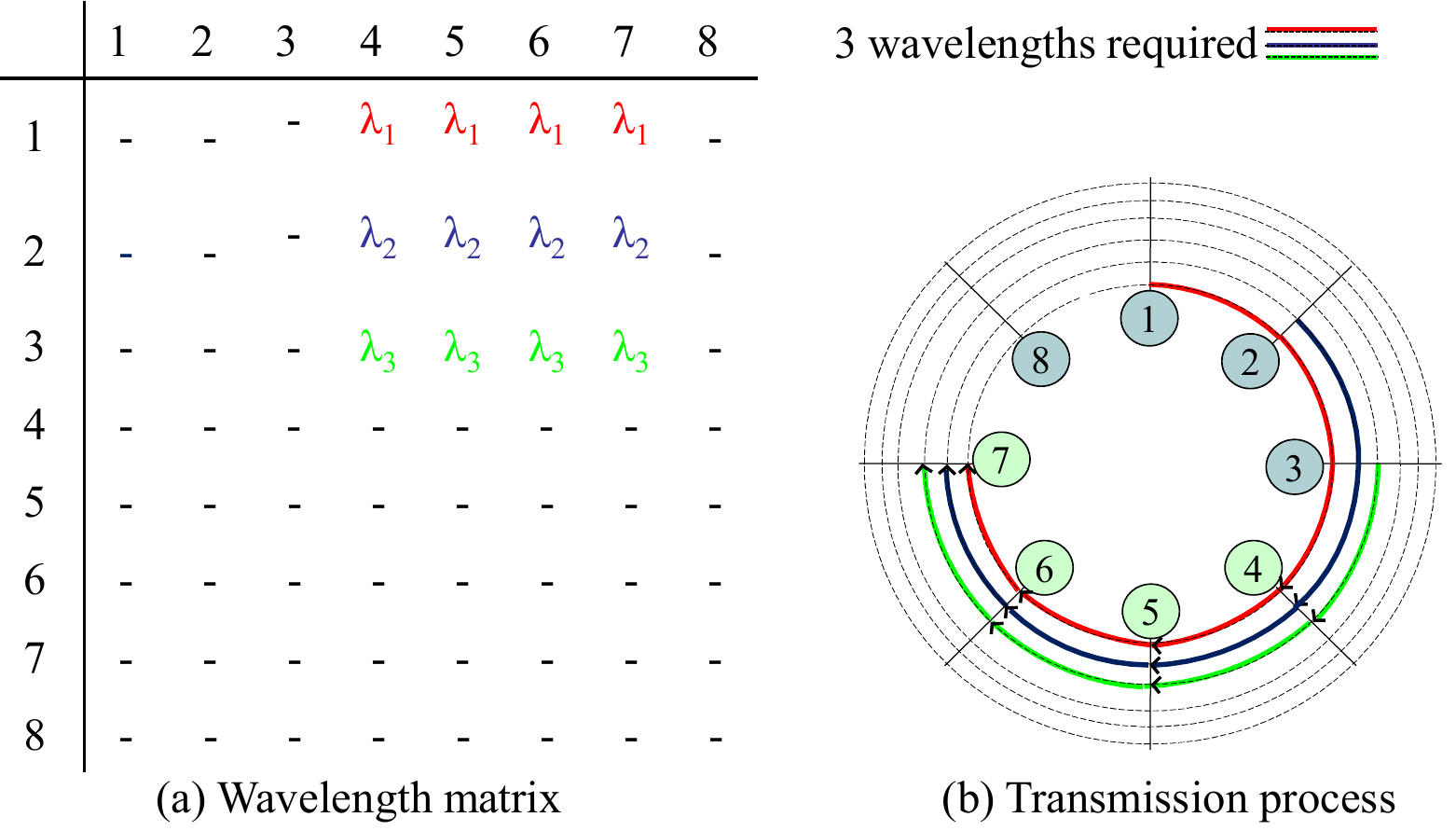}
	{
		\vspace{-1mm}
		\caption{Routing and wavelength assignment example from Period 1 to 2 in FP : (a) Wavelength matrix generated by RWA and (b) Communications process.}
		\label{fig:transmission}
	}
\end{figure}
\vspace{-3mm}

As mentioned in Section 2.2, the ONoC architecture consists of electrical core, optical control and optical data planes. The processes of routing and wavelength control are managed by the optical control plane. 
Firstly, the manager core calculates the optical number of cores and sends request to Routing and Wavelength Allocator (RWA), and then RWA generates the corresponding wavelength matrix and optical control packets. Secondly, RWA broadcasts the control packets along the cyclic optical control channel to configure the optical routers based on the generated wavelengths matrix. After control packets are received by the optical routers, the modulators and drop filters of the corresponding optical routers are configured with routing paths setup from each core in the previous period to all the cores in the next period. A specific wavelength for each routing path is specified according to the wavelength matrix, and multiple routing paths can transmit in parallel by different wavelengths through the same waveguide by WDM. 

As shown in Fig.~\ref{fig:transmission} (a), the wavelength matrix, denoted by $WM$, determines the routing and wavelength allocation, where $WM[i,j]=\lambda_k$ indicates \textit{core i} should communicate with \textit{core j} using wavelength $\lambda_k$. Assuming that 3 cores $[core_1, core_2, core_3]$ are assigned for \textit{Period 1} and 4 cores $[core_4,core_5, core_6, core_7]$ are assigned for \textit{Period 2} using RRM strategy. According to the $WM$, $\lambda_1$ is assigned for the optical path from the source $[core_1]$ to the destinations $[core_4,core_5, core_6, core_7]$, with $\lambda_2$ and $\lambda_3$ assigned to optical paths from the sources $[core_2]$ and $[core_3]$ respectively. With all sender's modulators and receivers' drop filters activated, these three optical paths from source cores $[core_1, core_2, core_3]$ allocated for \textit{Period 1} can send their data to cores $[core_4,core_5, core_6, core_7]$ allocated for \textit{Period 2} simultaneously by three different wavelengths $\lambda_1$, $\lambda_2$ and $\lambda_3$, as illustrated in Fig. \ref{fig:transmission} (b).

\vspace{-2mm}
\section{Evaluation} \label{sec:evaluation}

In this Section, we conduct extensive simulations to verify the effectiveness of our methods. We first introduce our simulation setup in Section 5.1, and present the comparisons between theoretical results and simulation results in Section 5.2. Then, we carry out simulations to compare our methods with exiting methods in Section 5.3, and compare our methods on ONoCs with ENoCs in Section 5.4.
\vspace{-1mm}
\subsection{Simulation settings}
\noindent The simulation platform is implemented in a machine with intel i5 3200 CPU and 32 Gb main memory. Gem5~\cite{binkert2011gem5} is used to simulate the target ONoC and ENoC systems, and DSENT~\cite{sun2012dsent} is used in calculating the energy consumption. To collect computation time and communication traces, we implement the FCNN in C using GNU Scientific Library and BLAS gemm~\cite{kaagstrom1998gemm}. Sigmoid is used as the non-linear activation function in hidden layers and softmax is used as the 
activation function in the output layer. We run the configured workloads with up to 1000 threads to generate the communication traces for up to 1000 cores. The communication traces are fed into Gem5 to obtain the communication time of the simulated ONoC and ENoC systems. Based on the simulated results of Gem5, we calculate the energy consumption of ONoC and ENoC using the energy model in \cite{grani2014design}, where the ONoC parameters include laser power, optical insertion loss, MR thermal tuning power, and etc. The values of these ONoC parameters are retrieved from DSENT. To get accurate computation time of each core, we repeat the computation workload of each core for a thousand times and then obtain the average time. In this way, we make sure the computation is carried out in the CPU caches, which matches our ONoC architecture.

The parameters of the simulated cores are shown in Table~\ref{tab:architecture}, and the parameters of the simulated ONoC are shown in Table~\ref{tab:parameters}, which are obtained from~\cite{nicolescu2017photonic}~\cite{van2018effect}~\cite{MPNoC}~\cite{vlasov2008high}. Note that the size of distributed SRAM in Table 4 is the maximum memory requirement of our methods calculated by the worst case in Fixed mapping using NN benchmarks under batch size 128. 
The bandwidth shown in Table~\ref{tab:parameters} is the bandwidth per wavelength. We set $\phi=1$ (refer to Eq. (9)) with the assumption that all cores can be utilized without limitation, while in practical $\phi$ can be adjusted according to the system limitation. The FCNN models used in our simulation are listed in Table~\ref{tab:NN}, which are well-known models for processing fashion-mnist~\cite{cirecsan2010deep} and cifar-10 dataset~\cite{lin2015far}, with high classification accuracy. The number of neurons in the output layers is 10, and the number of neurons in the input layers is 784 or 1024. The number of neurons in the hidden layers varies between 500 and 4000.


\begin{table}[!ht]
	\captionsetup{justification=centering}
	\caption{Parameters of core architecture and memory hierarchy}
	\vspace{-1mm}
	\label{tab:architecture}
	\centering
	\setlength{\tabcolsep}{1mm}{
		\begin{tabular}{c  c }
			\hline
			\!Parameter\! & \!Value\\ 
			\hline
			\! Core frequency \! & \!3.4 \textit{GHz}\\ 
			\! Core Rmax \! & \!6 \textit{GFLOPS}\\ 
			\! Private L1 (I cache/ D cache) \! & \!128/128 \textit{KB}\\ 
			\! L1 latency\! & \!1 \textit{cycle}\\ 
			\! Distributed SRAM \! & \!82.5 \textit{M}\\ 
			\! Distributed SRAM latency\! & \!~~~~10 \textit{cycles} (front end/back end)\\
			\! Memory controller latency\! & \!6 \textit{cycles}\\ 
			\! Bandwith of main memory\! & \!10 \textit{Gb/s}\\
			\hline
	\end{tabular}}
	\vspace{-2mm}
\end{table}

\begin{table}[!ht]
	\captionsetup{justification=centering}
	\caption{ONoC Parameters}
	\vspace{-1mm}
	\label{tab:parameters}
	\centering
	\setlength{\tabcolsep}{1mm}{
		\begin{tabular}{c  c  c  c }
			\hline
			\!Parameter\! & \!Value\! & \!Parameter\!& \!Value\\ 
			\hline
			\! Bandwidth \! & \!40 \textit{Gb/s}\! & \!Modulation speed \! & \!10 \textit{Gb/s}\\ 
			\! OE/EO delay \! & \! 1 \textit{cycle/flit} & \!Wavelength number \! & \!8/64\\ 
			\!Time of flight\! & \! 1 \textit{cycle/flit} & \!Waveguide propagation\! & \!1.5 \textit{dB/cm}\\ 
			\! Serialization delay\!  & \!2 \textit{cycles/flit}\! & \!Waveguide crossing\! & \! 1 \textit{dB}\\ 
			\!Splitter\! & \!0.5 \textit{dB}\! & \!Waveguide bending\! & \!0.005 \textit{dB/90$^{o}$}\\ 
			\!MR pass\! & \!0.005 \textit{dB/MR}\! & \!Laser efficiency\! & \!30\%\!\\
			\!MR drop\! & \!0.5 \textit{dB/MR}\! & \!Coupler \! & \!1 \textit{dB}\!\\
			\hline
	\end{tabular}}
	\vspace{-2mm}
\end{table}

\begin{table}[!ht]
	\vspace{-1mm}
	\captionsetup{justification=centering}
	\caption{Neural network list }
	\label{tab:NN}
	\centering
	\begin{tabular}{ l l }
		\hline
		NN1 & 784$\textendash$1000$\textendash$500$\textendash$10 \\
		\hline
		NN2 & 784$\textendash$1500$\textendash$784$\textendash$1000$\textendash$500$\textendash$10 \\
		\hline
		NN3 & 784$\textendash$2000$\textendash$1500$\textendash$784$\textendash$1000$\textendash$500$\textendash$10 \\
		\hline
		NN4 &784$\textendash$2500$\textendash$2000$\textendash$1500$\textendash$784$\textendash$1000$\textendash$500$\textendash$10  \\
		\hline
		NN5 & 1024$\textendash$4000$\textendash$1000$\textendash$4000$\textendash$10 \\
		\hline
		NN6 & 1024$\textendash$4000$\textendash$1000$\textendash$4000$\textendash$1000$\textendash$4000$\textendash$1000$\textendash$4000$\textendash$10 \\
		\hline
		\vspace{-3mm}
	\end{tabular}
\end{table}

\begin{table}[!htbp]
		\centering
		\captionsetup{justification=centering}
		\vspace{-1mm}
		\caption{Prediction accuracy for optimal number of cores}
		\label{tab:optimal_solution}
	\begin{tabular}{c c c} 
		\hline
		Neural network & APE (\%) & APD (\%) \\ \hline
		NN1            & 0.83         & 1.26         \\  
		NN2            & 1.24         &  2.78        \\ 
		NN3            & 1.82         &  2.53        \\ 
		NN4            & 1.92        &  3.68        \\ 
		NN5            & 0.93        &   1.42       \\ 
		NN6            & 2.22        &  4.29       \\ \hline
	\end{tabular}
\end{table}

\begin{figure*}[!htbp]
	\centering
	\vspace{-2mm}
	\centering
	\includegraphics[width=\textwidth]{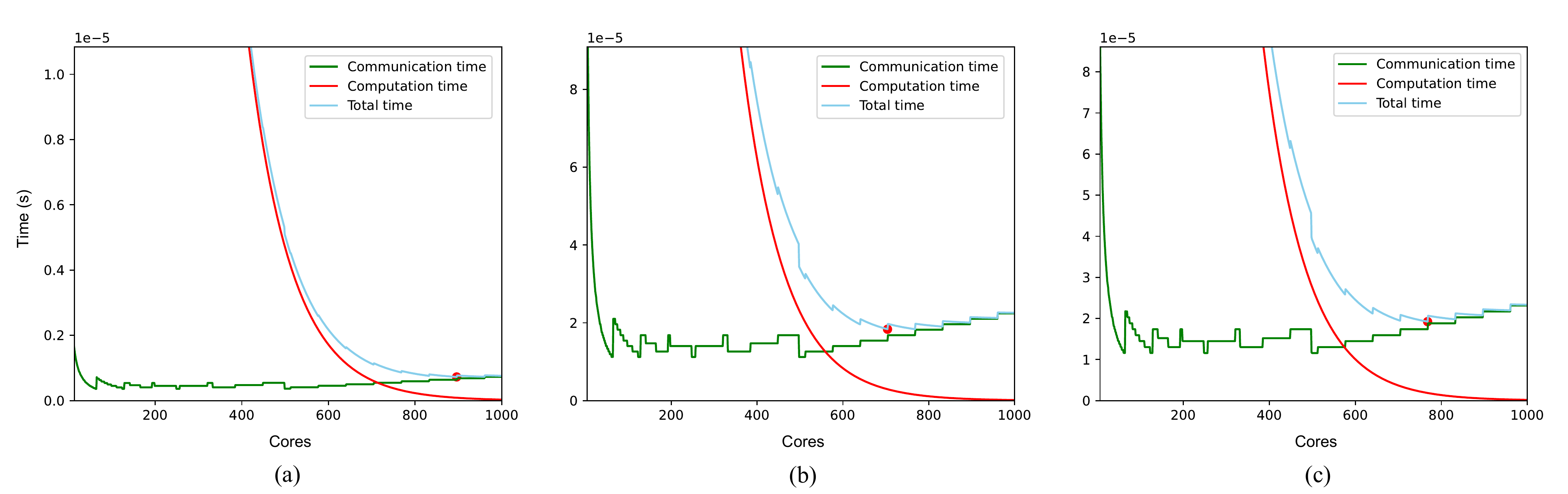}
	\vspace{-4mm}
	\caption{Performance of NN2 for (a) Period 3 (layer 3 during FP); (b) Period 8 (layer 3 during BP); (c) combined results for both Period 3
and 8 (layer 3 during both FP and BP).}
	\label{fig:nn4}
	\vspace{-3mm}
\end{figure*}

\vspace{-1mm}
\subsection{Optimal number of cores}
\noindent  In this experiment, we simulate the training process of the NN benchmarks for each layer under different numbers of cores, ranging from 1 to 1000, in order to find the optimal number of cores for each layer of each benchmark in simulated environment. We use Batch Sizes (BS) 1, 8, 32, 64 and wavelength numbers 8 and 64 in the simulation. 

To show Lemma 1 can predict the optimal number of cores accurately, first we calculate the Average Prediction Error (APE) and Average Performance Difference (APD) for the optimal number of cores for each layer of each NN benchmark. The values in Table \ref{tab:optimal_solution} are the average value calculated by using the simulated and theoretical optimal number of cores (lemma 1) generated under different batch sizes and wavelength numbers. As shown in Table \ref{tab:optimal_solution}, the APE of NN benchmarks is within 2.3\% and the APD is within 5\%, which verifies the prediction accuracy of our model.  

We use an example to illustrate how we calculate the optimal number of cores in one layer by simulation. Fig. \ref{fig:nn4} shows the performance of FP and BP in layer 3 of NN2, with batch size set to 32 and the number of wavelengths set to 64.
The x-axis of the figure is the number of cores ranging from 1 to 1000.
The three curves in each subgraph of Fig. \ref{fig:nn4} represent the computation time, communication time and the total time of the layer, where Fig. \ref{fig:nn4} (a) is for layer 3 during FP, Fig. \ref{fig:nn4} (b) is for layer 3 during BP, and Fig. \ref{fig:nn4} (c) is the combined results for layer 3 during both FP and BP. It can be seen from Fig. \ref{fig:nn4} (a) and (b) that the optimal number of cores marked by red dots for FP and BP are 896 and 704 respectively, because computations and communications workload in FP and BP are different. Since the same cores need to be assigned to the same layer to guarantee the data locality, we calculate the optimal number of cores by using the combined results in both FP and BP processes. As shown in Fig. \ref{fig:nn4} (c), the optimal number of cores assigned for layer 3 during both BP and FP is 769.

\begin{table}[!htbp]
	\captionsetup{justification=centering}
	\caption{Performance improvement of the optimal solution over FNP and FGP.}
	\vspace{-1mm}
	\label{tab:performance}
	\renewcommand\arraystretch{1}
	\begin{tabular}{|c|c|c|c|c|c|c|}
		\hline
		\multicolumn{2}{|c|}{} & BS 1 & BS 8 & BS 64 & BS 128 & Average \\ \hline
		\multirow{2}{*}{NN1}         &FNP        & 10.66\%      & 9.85\%       & 9.86\%        & 9.96\%         & 10.08\% \\ \cline{2-7} 
		&FGP        & 9.95\%       & 1.28\%       & 0.00\%        & 0.00\%         & 2.81\%  \\ \hline
		\multirow{2}{*}{NN2}         &FNP        & 13.52\%      & 18.02\%      & 20.53\%       & 20.84\%        & 18.23\% \\ \cline{2-7} 
		&FGP        & 19.32\%      & 3.51\%       & 0.26\%        & 0.10\%         & 5.80\%  \\ \hline
		\multirow{2}{*}{NN3}         &FNP        & 19.28\%      & 24.96\%      & 27.61\%       & 27.94\%        & 24.95\% \\ \cline{2-7} 
		&FGP        & 16.87\%      & 3.45\%       & 0.19\%        & 0.04\%         & 5.14\%  \\ \hline
		\multirow{2}{*}{NN4}         &FNP        & 21.49\%      & 29.25\%      & 31.93\%       & 32.36\%        & 28.76\% \\ \cline{2-7} 
		&FGP        & 17.75\%      & 4.56\%       & 0.17\%        & 0.06\%         & 5.63\%  \\ \hline
		\multirow{2}{*}{NN5}         &FNP        & 17.00\%      & 26.63\%      & 31.16\%       & 31.62\%        & 26.60\% \\ \cline{2-7} 
		&FGP        & 12.76\%      & 4.22\%       & 0.64\%        & 0.29\%         & 4.48\%  \\ \hline
		\multirow{2}{*}{NN6}         &FNP        & 15.66\%      & 24.68\%      & 29.64\%       & 30.24\%        & 25.05\% \\ \cline{2-7} 
		&FGP        & 14.98\%      & 6.09\%       & 0.93\%        & 0.42\%         & 5.61\%  \\ \hline
	\end{tabular}
\end{table}

\begin{table}[!htbp]
	\captionsetup{justification=centering}
	\caption{Energy difference of the optimal solution over FNP and FGP.}
	\vspace{-1mm}
	\label{tab:energy}
	\renewcommand\arraystretch{1}
	\renewcommand\tabcolsep{5.3pt}
	\begin{tabular}{|c|c|c|c|c|c|c|}
		\hline
		\multicolumn{2}{|l|}{} & BS 1     & BS 8     & BS 64    & BS 128   & Average  \\ \hline
		\multirow{2}{*}{NN1}  & FNP  & -1.34\%  & -3.19\%  & -5.69\%  & -5.57\%  & -3.95\%  \\ \cline{2-7} 
		& FGP  & 12.70\%  & 3.44\%   & 0.00\%   & 0.00\%   & 4.04\%   \\ \hline
		\multirow{2}{*}{NN2}  & FNP  & -4.34\%  & -1.62\%  & -5.34\%  & -4.91\%  & -4.05\%  \\ \cline{2-7} 
		& FGP  & 27.80\%  & 11.48\%  & 2.71\%   & 2.56\%   & 11.14\%  \\ \hline
		\multirow{2}{*}{NN3}  & FNP  & -3.31\%  & 2.59\%   & -0.74\%  & -2.02\%  & -0.87\%  \\ \cline{2-7} 
		& FGP  & 26.82\%  & 13.56\%  & 4.54\%   & 2.94\%   & 11.96\%  \\ \hline
		\multirow{2}{*}{NN4}  & FNP  & -5.58\%  & 3.07\%   & 0.00\%   & 2.50\%   & 0.00\%   \\ \cline{2-7} 
		& FGP  & 28.69\%  & 15.14\%  & 5.19\%   & 2.46\%   & 12.87\%  \\ \hline
		\multirow{2}{*}{NN5}  & FNP  & -5.24\%  & 4.05\%   & 7.95\%   & 8.13\%   & 3.72\%   \\ \cline{2-7} 
		& FGP  & 18.70\%  & 8.13\%   & 2.51\%   & 1.59\%   & 7.73\%   \\ \hline
		\multirow{2}{*}{NN6}  & FNP  & -26.01\% & -15.86\% & -17.47\% & -17.72\% & -19.27\% \\ \cline{2-7} 
		& FGP  & 27.31\%  & 15.82\%  & 4.97\%   & 3.11\%   & 12.80\%  \\ \hline
	\end{tabular}
\end{table}

\begin{table*}[htbp]
	\captionsetup{justification=centering}
	\caption{Optimal number of cores}
	\vspace{-1mm}
	\label{tab:optimal_number}
	\centering
	\setlength{\tabcolsep}{2mm}{
		\begin{tabular}{lllll}
			\hline
 
			\multicolumn{1}{|l|}{}                  & \multicolumn{1}{l|}{BS = 1, wavelength = 8}                 & \multicolumn{1}{l|}{BS = 1, wavelength = 64}                       & \multicolumn{1}{l|}{BS = 8, wavelength = 8}                                                                    & \multicolumn{1}{l|}{BS = 8, wavelength = 64}                                                                   \\ \hline
			\multicolumn{1}{|l|}{NN1}               & \multicolumn{1}{l|}{[1000, 257, 10]}                     & \multicolumn{1}{l|}{[1000, 257, 10]}                            & \multicolumn{1}{l|}{[1000, 257, 10]}                                                                       & \multicolumn{1}{l|}{[1000, 500, 10]}                                                                      \\ \hline
			\multicolumn{1}{|l|}{NN2}               & \multicolumn{1}{l|}{\begin{tabular}[c]{@{}l@{}}[1000, 393, 505, 257, 10] \end{tabular}}             & \multicolumn{1}{l|}{\begin{tabular}[c]{@{}l@{}}[1000, 449, 513, 321, 10] \end{tabular} }                    & 
			\multicolumn{1}{l|}{\begin{tabular}[c]{@{}l@{}}[1000, 401, 505, 329, 10] \end{tabular} }                                                               & \multicolumn{1}{l|}{\begin{tabular}[c]{@{}l@{}}[1000, 784, 641, 500, 10] \end{tabular}}                                                              \\ \hline
			\multicolumn{1}{|l|}{NN3}               & \multicolumn{1}{l|}{\begin{tabular}[c]{@{}l@{}}[1000, 753, 393, 505, 257, 10] \end{tabular} }         & \multicolumn{1}{l|}{\begin{tabular}[c]{@{}l@{}}[1000, 769, 449, 513, 257, 10]] \end{tabular} }                
			& \multicolumn{1}{l|}{\begin{tabular}[c]{@{}l@{}}[1000, 753, 393, 505, 257, 10]\end{tabular} }                                                            & \multicolumn{1}{l|}{\begin{tabular}[c]{@{}l@{}}[1000, 1000, 784, 513, 257, 10] \end{tabular} }                                                         \\ \hline
			\multicolumn{1}{|l|}{NN4}               & \multicolumn{1}{l|}{ \begin{tabular}[c]{@{}l@{}}[1000, 673, 753, 393, 505,\\ 257, 10] \end{tabular}  }     
			& \multicolumn{1}{l|}{\begin{tabular}[c]{@{}l@{}}[1000, 705, 769, 449, 577,\\ 257, 10] \end{tabular} }           
			& \multicolumn{1}{l|}{\begin{tabular}[c]{@{}l@{}}[1000, 681, 753, 417, 553,\\ 257, 10] \end{tabular} }                                                       & \multicolumn{1}{l|}{\begin{tabular}[c]{@{}l@{}}[1000, 897, 1000, 449, 577,\\ 500, 10] \end{tabular} }                                                     \\ \hline
			\multicolumn{1}{|l|}{NN5}               & \multicolumn{1}{l|}{ [1000, 505, 665, 10]}                
			& \multicolumn{1}{l|}{[1000, 833, 833, 10]}                      
			& \multicolumn{1}{l|}{[1000, 681, 817, 10]}                                                                & \multicolumn{1}{l|}{[1000, 1000, 833, 10]}                                                                \\ \hline
			\multicolumn{1}{|l|}{NN6}               & \multicolumn{1}{l|}{\begin{tabular}[c]{@{}l@{}}[1000, 513, 801, 513,  801,\\ 513, 801, 10]\end{tabular}} &  \multicolumn{1}{l|}{\begin{tabular}[c]{@{}l@{}}[1000, 513, 833, 513, 833,\\ 513, 833, 10] \end{tabular}} & \multicolumn{1}{l|}{\begin{tabular}[c]{@{}l@{}} [1000, 513, 801, 513, 801,\\ 513, 801, 10]\end{tabular}} & \multicolumn{1}{l|}{\begin{tabular}[c]{@{}l@{}}[1000, 1000, 833, 1000, 833,\\ 1000, 833, 10] \end{tabular}} \\ \hline
			&                                                     &                                                            &                                                                               &                                                                                                      
	\end{tabular}}
\end{table*}

\begin{figure*}[hbpt]
	\centering
	\includegraphics[width=\textwidth]{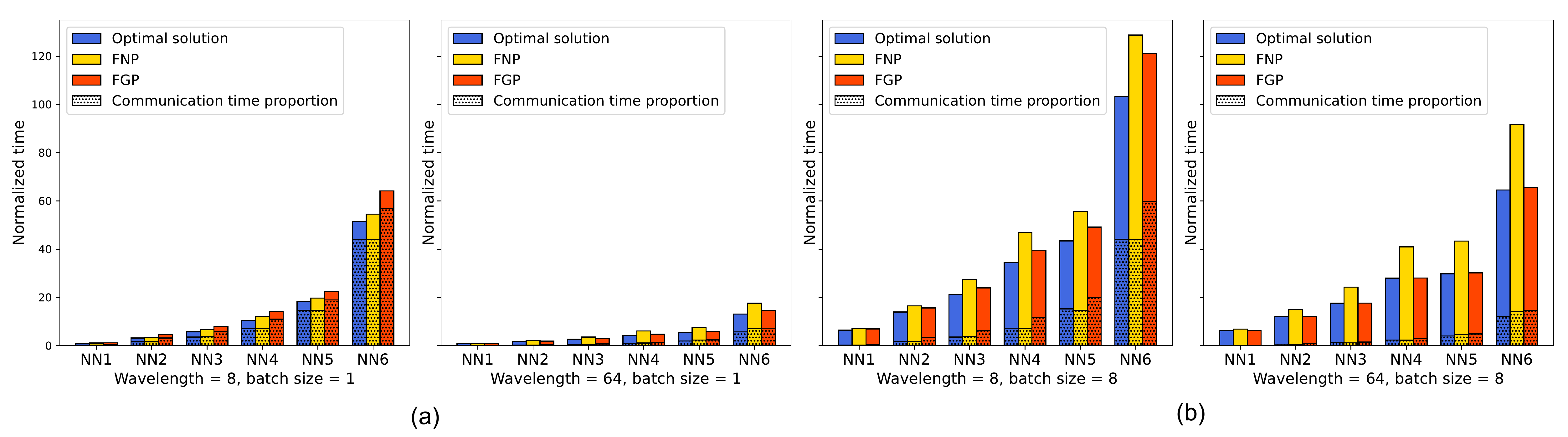}
	\vspace{-4mm}
	\caption{Comparisons on the training time by three methods for NN benchmarks with 8 and 64 wavelengths under (a) batch size 1 and (b) batch size 8. The shaded part quantifies the communication time among the total training time.}
	\label{fig:traditional}
	\vspace{-4mm}
\end{figure*}

\begin{figure*}[htbp]
	\centering
	\includegraphics[width=\textwidth]{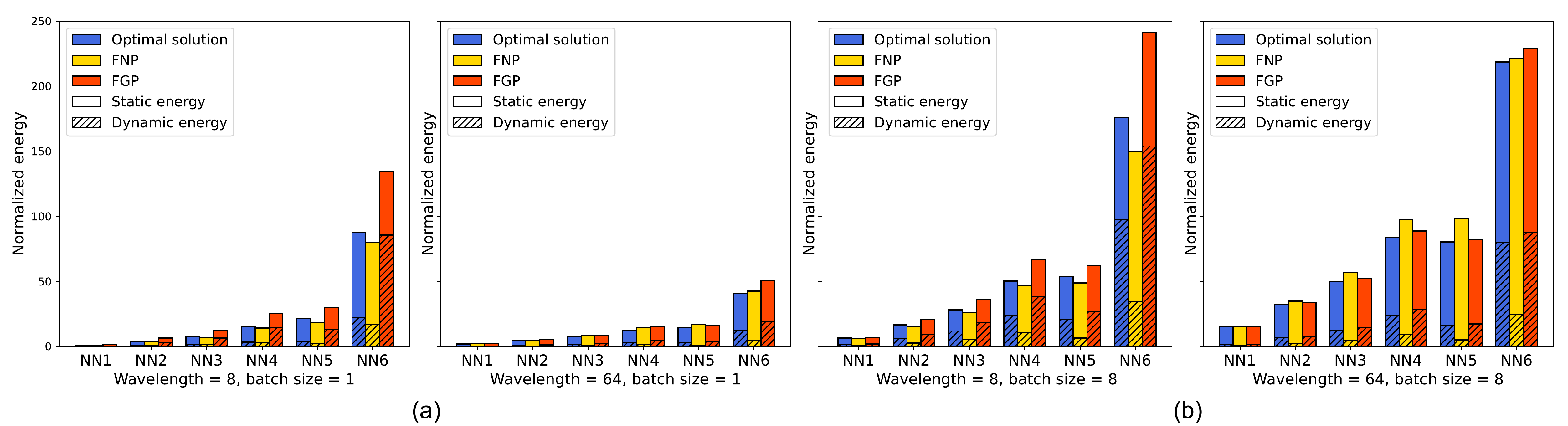}
	\vspace{-4mm}
	\caption{Comparisons on the normalized energy by three methods for NN benchmarks with 8 and 64 wavelengths under (a) batch size 1 and (b) batch size 8. The shaded part represents for dynamic energy consumption while the un-shaded part represents for static energy consumption.}
	\label{fig:O_energy}
	\vspace{-4mm}
\end{figure*}

\vspace{-1mm} 
\subsection{Comparisons with Traditional Methods}
To show better performance and energy efficiency of our proposed optimal solution, two traditional methods, (1) Finest-Grained Parallel method (FGP)~\cite{pethick2003parallelization} by mapping one neuron to one core; (2) Fixed Number Parallel method (FNP)~\cite{zou2019learn}, which is set with a fixed number of cores (200) for each period during the training, are compared with the optimal solution. When comparing the above three methods, we adopt Fixed Mapping strategy in the simulation, and use Batch Sizes (BS) 1, 8, 64, 128 and wavelength numbers 8 and 64 in the simulation.

Table \ref{tab:performance} shows the average performance improvement for the optimal solution over FNP and FGP for each NN benchmark using wavelength numbers 8 and 64. As can be seen from Table \ref{tab:performance}, with the increasing of batch sizes, the performance improvement for the optimal solution compared with FNP is increasing, while performance improvement compared with FGP is decreasing for all NN benchmarks. That is because computation workload is increasing with the increase of batch size, thus the optimal solution tends to use more cores getting close to FGP.
Table \ref{tab:energy} shows the energy difference for the optimal solution with FNP and FGP for each NN benchmark. It can be seen that the optimal solution is more energy efficient than FGP, and less energy efficient than FNP in most cases. 

Fig. \ref{fig:traditional} and Fig. \ref{fig:O_energy} show the comparisons on performance and energy consumption between the optimal solution and two traditional methods with NN benchmarks under batch sizes 1 and 8 with wavelength numbers 8 and 64. All results are normalized by dividing the first result of NN1. The corresponding optimal number of cores generated by our optimal solution is shown in Table \ref{tab:optimal_number}. From Fig. \ref{fig:traditional} (a) and (b), we can see that the total training time of the optimal solution is the lowest among three methods under different batch sizes and wavelengths. On average, when the batch size is set to 1 and the number of wavelengths is set to 8 or 64, the training time of the optimal solution compared with FGP and FNP is reduced by 16.27\% and 15.27\% respectively. When the batch size is 8, the training time of the optimal solution is reduced by 22.23\% and 3.85\%, respectively. It can be seen from Fig. \ref{fig:traditional} (a) and (b) that the training time of FCNN using 64 wavelengths is less than that using 8 wavelengths. This is because the communication time is largely reduced due to concurrent communications achieved by WDM with more wavelengths and less time slots.

Fig. \ref{fig:O_energy} (a) and (b) show that the optimal solution consumes the least energy among three methods under wavelength number 64, and FNP consumes the least energy under wavelength number 8 for both batch sizes 1 and 8. That is because the static energy affected by the training time is dominated in the total energy consumption under wavelength number 64, while dynamic energy affected by the number of cores takes up more proportion in total energy consumption under wavelength number 8. This indicates that our method tends to be more energy-efficient especially when the number of wavelengths is large (e.g. 64 in the example). On average, when the batch size is set to 1 and the number of wavelengths is set to 8 or 64, the energy consumption of the optimal solution compared with FGP is reduced by 23.67\% but is increased by 7.64\% compared with FNP. 
When the batch size is 8, the energy consumption of the optimal solution compared with FGP is reduced by 11.26\% but is increased by 1.83\% compared with FNP.

\begin{figure*}[htbp]
	\centering
	\includegraphics[scale=0.36]{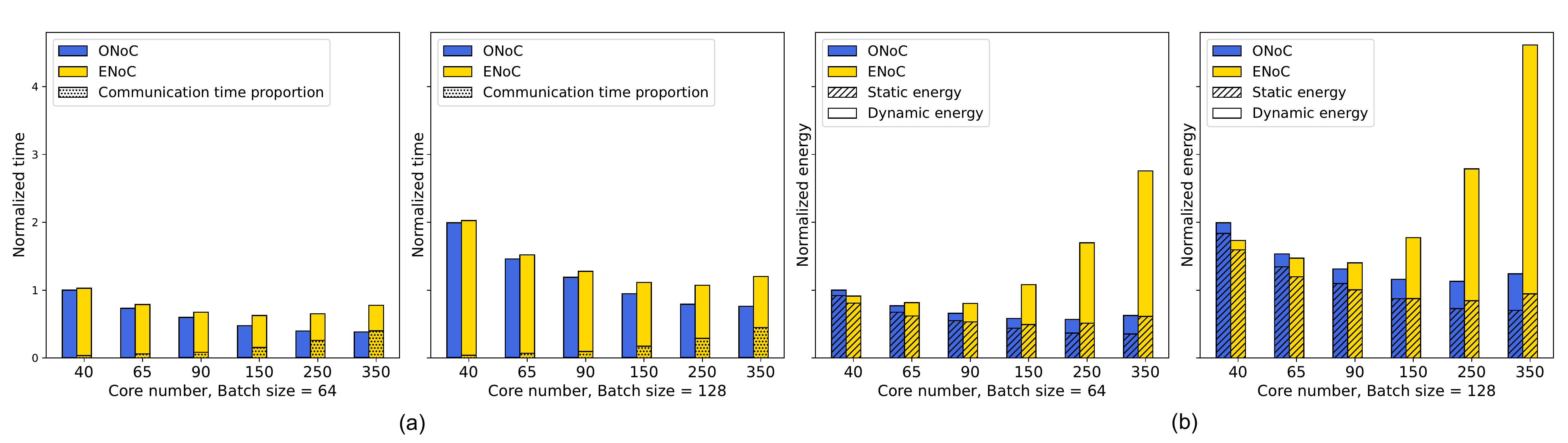}
	\vspace{-4mm}
	\caption{Comparisons between ONoC and ENoC using NN2 under batch size 64 and 128 for (a) performance and (b) energy consumption.}
	\label{fig:NoC}
	\vspace{-4mm}
\end{figure*}

\vspace{-1mm} 
\subsection{Comparisons with ENoC}

In order to demonstrate the advantages of using ONoCs, we compare our methods on ONoC with ENoC in terms of both performance and energy consumption.
The performance of the three mapping strategies (FM, RRM, and ORRM) in ONoC are almost the same because latency is not affected much by the transmission distance in ONoC due to the very high transmission speed of optical signals. However, different mapping strategies will make a huge difference in training time for ENoC because of the hop by hop routing in ENoC. In order to make a fair comparison, we compare the performance and energy consumption between ONoC with ENoC by Fixed Mapping with a range of fixed number of cores (40, 65, 90, 150, 250, 350), rather than comparing based on the optimal number of cores. We use NN2 with batch sizes 64 and 128 because NN2 has a moderate size, neither too small nor too deep. The parameters for ONoC and ENoC in Gem5 are set as follows. The packet size for both ONoC and ENoC is set to 64~\textit{bytes} and 16 bytes/flit. The number of wavelengths used for ONoC is 64. The latency of hop by hop routing in ENoC is set to 2 cycles. Shortest path routing is used for the 4-channel electric routers in ENoC.

Fig. \ref{fig:NoC} (a) and (b) show the comparisons of performance and energy consumption between ONoC and ENoC for NN2 with batch size 64 and 128 respectively. It can be seen from Fig. \ref{fig:NoC} (a) that FCNN training time on ONoC is smaller than ENoC for both batch sizes. Compared with ENoC, the average time reductions for ONoC under batch size 64 and 128 are 21.02\% and 12.95\% respectively. That is because ONoC allows concurrent communications using WDM to deal with large volume of core-to-core communications resulting in less communication time. When the number of cores is increasing, the performance difference is becoming more obvious for both batch sizes, because ENoC has larger communication latency among cores due to increasing routing delays. 

From Fig. \ref{fig:NoC} (b), we can see that the energy consumption of ONoC is larger than ENoC when the number of cores is small (no more than 90 in the example), and much smaller than ENoC when the number of cores is large (more than 90 in the example). This is because the static power is dominant in ONoC when the number of cores is small, while the dynamic energy in ENoC caused by the communication overhead is increasing rapidly when the number of cores is increased. 
On average, the energy consumption of ONoC is reduced by 47.85\% and 39.27\% compared with ENoC under batch size 64 and 128 respectively. 

In summary, the performance of our method on ONoC outperforms ENoC under different batch sizes, with more energy-efficiency especially when the number of cores is large.

\vspace{-2mm}
\section{Related Work} \label{sec:related}   
\noindent  There are many studies on parallel computing for neural network training with CPU, GPU, FPGA, ASIC, and etc.~\cite{ben2019demystifying}, which attempts to utilize the computing cores to accelerate neural networks. Due to the space limitation, we only show the related researches in the following aspects: 

(1) {\it Data Reuse Methods}: The state-of-the-art neural network acceleration approaches are mostly based on data reuse methods.  One data reuse method called {\it weight reuse} stashes the weight in the cache of cores during the training, which is used in co-processor design~\cite{chakradhar2010dynamically} and mobile co-processor designed for CNN training~\cite{gokhale2014240}. A second data reuse method is called {\it output reuse}, with the partial sum collected and maintained in the cache of cores, which is used in Shidiannao~\cite{du2015shidiannao}. Another data reuse method  called {\it row reuse} is proposed in Eyeriss~\cite{chen2016eyeriss}, where the computation of any given CNN shape is mapped onto the PE array. Our work reuses outputs and weights by storing the neural network parameters in different cores' SRAM distributively and reusing them in the back propagation.

(2) {\it Neural Network Acceleration Architectures}:
To reduce average data latency and energy consumption, a mesh NoC is proposed for FCNN training~\cite{liu2018neu}. 
To increase throughput and power efficiency, a 3D NoC architecture for ANN~\cite{firuzan2018reconfigurable}  and a NoC-based accelerator for CNN~\cite{chen2016eyeriss} are proposed. A microswitch NoC is proposed as a spatial neural network accelerator~\cite{kwon2017rethinking} with the consideration of latency, throughput, area and energy. More interestingly, a study on NoC based DNN accelerators~\cite{ascia2019networks} shows that the communication in neural networks accounts for the major delay in training and the memory accesses are accountable for most energy cost. 
All the above studies are based on ENoC, which has much higher latency and energy cost compared with ONoC.

\vspace{0 mm}
(3) {\it Mapping Methods}: 
In~\cite{multipleM}, several models for mapping neural networks on ENoC are proposed to get the high performance.
In~\cite{akopyan2015truenorth}, a mapping method is used by considering the wire-length between neurons for a TrueNorth neuromorphic system.
In~\cite{mand2012artificial}, mini-column used as basic functional element is mapped to each core, where each mini-column consists of 100 neurons. All of the above mapping strategies follow a sequential mapping scheme that maps neuron populations to cores one by one, which does not provide good performance for inter-neuron communications between different cores. Moreover, they cannot be applied to ONoC directly because they are not designed with the consideration of the ONoC's properties, thus cannot take advantage of optical transmissions. 
Besides, there are some studies for ONoC mapping problem by designing wavelength assignment~\cite{WAssignment}, or reducing crosstalk~\cite{crosstalk2} and thermal sensitivity influence~\cite{ThemalS} recently.
However, these approaches are all based on heuristic algorithm so they suffer from long training time to get the approximate optimal solution.

It is worth mentioning that there is {\it no research on the optimization of parallel FCNN training through the computation and communication trade-off in the context of ONoC}, which we develop to bridge the gap in this paper.

\vspace{-1 mm}
\section{Conclusions} \label{sec:conclusion}
\noindent In this paper, we propose a fine-grained parallel computing model to analyze the trade-off between communication and computation for training FCNN on ONoC with the objective of minimizing the total training time. Based on the model, we derive the optimal number of cores required for each training period in both forward propagation and back propagation processes. We further present three mapping strategies (FM, RRM and ORRM) and compare their advantages and disadvantages with regard to hotspot level, memory requirement and state transitions. Simulation results show that the average prediction error on the optimal number of cores using NN benchmarks is within 2.3\%. Extensive simulations demonstrate that our proposed methods can reduce FCNN training time by 22.28\% and 4.91\% on average compared with two traditional parallel computing methods, respectively. Compared with ENoC, our methods on ONoC can achieve 21.02\% and 12.95\% on reducing training time and 47.85\% and 39.27\% on saving energy under batch sizes 64 and 128, respectively.
Future work can be conducted for model extensions to other neural networks or other ONoC topologies.

\vspace{-1 mm}
\bibliographystyle{IEEEtran}
\bibliography{paper}

\vspace{-20mm}
\end{document}